\documentclass[letterpaper, 10 pt, conference]{ieeeconf}  

\IEEEoverridecommandlockouts                              

\usepackage{hyperref}
\usepackage{url}
\usepackage{graphicx}
\usepackage{hyperref}
\usepackage{comment}
  \usepackage{amsthm}
\usepackage{amsmath}
\usepackage{amssymb}
\usepackage{amsfonts}
\usepackage{amsopn}
\usepackage{diagbox}
\usepackage{makecell}
\usepackage{cite}
\usepackage{hyperref}
\hypersetup{
    colorlinks=true,
    linkcolor=blue,
    filecolor=blue,      
    urlcolor=blue,
    citecolor=cyan,
}
\usepackage[dvipsnames]{xcolor}
 
\usepackage{pifont}
\usepackage{booktabs} 
\usepackage{multirow}

\usepackage{amsmath}
\usepackage{bbding}

\usepackage{balance}

\usepackage{mathrsfs}

\usepackage{subcaption}

\usepackage{algorithm}
\usepackage{algpseudocode}
\usepackage{caption}

\newtheorem{theorem}{Theorem}[]
\newtheorem{proposition}{Proposition}[]
\newtheorem{lemma}{Lemma}[]
\newtheorem{remark}{Remark}

\newtheorem{definition}{Definition}[]

\newcommand{\K}{\mathbf{K}}
\newcommand{\G}{\mathbf{G}}

\newcommand{\z}{\mathbf{z}}

\newcommand{\Q}{\mathbf{Q}}

\newcommand{\Rb}{\mathbb{R}}
\newcommand{\bu}{\mathbf{u}}

\newcommand{\x}{\mathbf{x}}

\newcommand{\A}{\mathbf{A}}

\newcommand{\B}{\mathbf{B}}

\newcommand{\D}{\mathbf{D}}

\newcommand{\Eb}{\mathbb{E}}

\newcommand{\F}{\mathbf{F}}

\newcommand{\Hb}{\mathbf{H}}

\newcommand{\I}{\mathbf{I}}

\usepackage{todonotes}
\title{\LARGE \bf
Benefits  of Linear Dynamic State Feedback in Co-stabilization
}

\author{Xiong Zeng$^*$  
  \quad\;\; Necmiye Ozay$^*$ \quad\;\; Mario Sznaier$^+$
	 \thanks{$^*$Department of Electrical Engineering and Computer Science, University
of Michigan, Ann Arbor, MI 48109. Emails: \{zengxion, necmiye\}@umich.edu.}
     \thanks{$^+$Department of Electrical and Computer Engineering,  Northeastern University, Boston, MA 02115. Email: m.sznaier@northeastern.edu.}
		}

\begin{document}

\maketitle
\thispagestyle{empty}
\pagestyle{empty}

\begin{abstract}
Co-stabilization, i.e., designing a single controller that stabilizes multiple systems, is a fundamental problem in robust and data-driven control. While any stabilizable linear system admits a stabilizing linear static state feedback controller, this equivalence does not extend to co-stabilization. In particular, there exist system collections that cannot be co-stabilized by linear static state feedback but can be co-stabilized using linear dynamic state feedback.
In this paper, we study the role of controller memory in co-stabilization. We show that linear dynamic state feedback strictly enlarges the set of co-stabilizable systems compared to static feedback, for both scalar systems and high-dimensional examples. At the same time, we identify structural limitations that cannot be overcome even with dynamic controllers.
We also develop a path-integral-based algorithm for computing co-stabilizing controllers for a finite set of systems. Numerical results demonstrate that increasing controller memory enlarges the feasible co-stabilization region. These results highlight controller architecture as a key structural factor in co-stabilization, with potential implications for reducing the sample complexity of learning-based control.

\end{abstract}

 \section{Introduction}

Learning a stabilizing controller is a key step in many learning-based control tasks. Recent works have studied fundamental limits of learning to stabilize from a statistical perspective \cite{tsiamis2022learning,zeng2023hardness,zhangstabilizing,toso2025learning}. In these works, the hardness of the problem is characterized by sample complexity, i.e., the number of samples required to achieve stabilization with high probability. Of particular relevance is \cite{zeng2023hardness}, which shows that the hardness of learning to stabilize is governed by two key factors: distinguishability and co-stabilizability of the systems involved. While these principles appear broadly applicable, most existing sample complexity results focus on linear systems under full state observation with linear static state feedback controllers.\footnote{While dynamic controllers are used in some learning-based control approaches \cite{dean2020sample,hazan2020nonstochastic}, the specific benefits of controller memory for stabilization have not been systematically investigated to the best of our knowledge.}

Motivated by these observations, we focus on the \emph{co-stabilization} problem, i.e., designing a single controller that stabilizes multiple systems. Co-stabilization is a fundamental structural property underlying robustness and learning-based control: the ability to stabilize a larger set of systems directly translates into increased tolerance to model uncertainty.

A key fact underlying our work is the following. While any stabilizable fully observed linear system admits a stabilizing linear static state feedback controller, this equivalence does not extend to co-stabilization. In particular, there exist pairs of systems that cannot be co-stabilized by any linear static state feedback controller but can be co-stabilized by a linear dynamic state feedback controller. This observation suggests that controller memory can fundamentally enlarge the class of systems that can be co-stabilized by a single policy.
In this paper, we formalize this intuition and develop an algorithm to empirically demonstrate the benefits of dynamic state feedback for co-stabilization.
Specifically, our contributions are summarized as follows:
\begin{itemize}
    \item First, we prove that the linear dynamic state feedback is more expressive on the co-stabilization problem than the linear static state feedback for some typical examples from the literature. This is achieved by using results from strong stabilization and simultaneous stabilization in robust control \cite{doyle2013feedback,sanchez1998robust}. 
    \item Inspired by our theoretical results,  we design a co-stabilization path integral algorithm, which iteratively updates linear dynamic state feedback policies, and empirically verify the advantage of linear dynamic state feedback on co-stabilization.
\end{itemize}

The rest of the paper is organized as follows. 
In Section \ref{sec_problem}, we introduce the problem setup. 
In Section \ref{sec_sta_vs_dyna}, we present our main theoretical results. 
In Section \ref{sec_algorithm_path_integral}, we develop a path-integral-based algorithm for computing co-stabilizing controllers. 
Section \ref{sec_exp} provides numerical experiments. 
Finally, Section \ref{sec_con} concludes the paper and discusses future directions. 
All proofs are deferred to the appendix.

\section{Problem Setup}
\label{sec_problem}
We consider the following fully-observed discrete-time linear time-invariant (LTI) system:
\begin{equation}
\label{LTI}
S_s: \mathbf{x}_{t+1} =\mathbf{A} \mathbf{x}_t + \mathbf{B} \mathbf{u}_t,
\end{equation}
where $\mathbf{x}_t \in \mathbb{R}^{n}$, $\mathbf{u}_t \in \mathbb{R}^{m}$ are the state and input at time $t$.  For simplicity, we assume $ \Eb[\mathbf{x}_{0}\mathbf{x}_{0}^\top] = \I_n$.   
In the remainder of the paper, we denote a system in the form \eqref{LTI} by the pair $(\mathbf{A},\B)$.

 We consider linear dynamic state feedback controllers, represented with a linear time-invariant system of the form:  
\begin{equation}
\label{p_dynamic_state_feedback}
   S_c: \begin{cases}
        \z_{t+1} &= \F \z_t + \G \x_t\\
        \bu_t & = \Hb \z_t + \K  \x_t
    \end{cases},
\end{equation}
where the parameter matrices $\K \in \Rb^{m\times n}$, $\Hb \in \Rb^{m\times p}$, $\G \in \Rb^{p \times n}$, $\F \in \Rb^{p\times p}$, and the state of the controller $\z_t\in\Rb^p$, where $p$ is the memory of the controller.  Then, the following augmented system represents the closed-loop:
\begin{equation}
\label{aug_LTI_dy}
 \begin{bmatrix}
    \x_{t+1}\\
    \z_{t+1}
\end{bmatrix} = \underbrace{\begin{bmatrix}
\A + \B \K & \B \Hb \\
\G & \F
\end{bmatrix}}_{=: \D_{\A,\B}} \begin{bmatrix}
    \x_t\\
    \z_t
\end{bmatrix}.
 \end{equation}

\begin{remark}
\label{rem_relationship}
    When $\Hb=\mathbf{0}_{m\times p},\G=\mathbf{0}_{p \times n},$ and $\F=\mathbf{0}_{p \times p}$, the linear dynamic state feedback controller is reduced to the linear static state feedback controller $\bu_{t} = \K \x_t$. When  $\G=[\mathbf{I}_n \;\mathbf{0}_{ n \times (h-1)n}]^\top$, $\F=\begin{bmatrix}
\mathbf{0}_{n\times(h-1)n } & \mathbf{0}_{n\times n } \\
\mathbf{I}_{(h-1)n} & \mathbf{0}_{(h-1)n\times n}
\end{bmatrix}$, and $ \Hb = [\K_1 \dots   \K_{h-1}]$ with $\K_i \in \Rb^{m\times n}$ for $i\in[h-1]$, the linear dynamic state feedback controller is reduced to the linear {state-history} feedback controller $\bu_t = \K\x_t + \sum_{i=1}^{h-1} \K_i \x_{t-i}$.
\end{remark}

Since a linear static state feedback controller is a special case of a linear dynamic state feedback controller, any task that can be achieved by the former can also be achieved by the latter. On the other hand, when $(\A,\B)$ is given, it is well known that there exists a controller that stabilizes $(\A,\B)$ if and only if there exists a linear static state feedback controller that does so (see, e.g., {\cite[Theorem 14.5]{hespanha2018linear}}). Next, we look at the problem of co-stabilization (also known
as simultaneous stabilization {\cite{blondel1994simultaneous}}).


\begin{definition}[Co-stabilizability by a linear controller]
\label{def_co_stabilization_static}
A pair of systems $S_1 =(\A_1,\B_1)$ and $S_2=(\A_2, \B_2)$ is co-stabilizable by a linear controller $S_c$ if both the feedback interconnection of $S_1$ and $S_c$ and that of $S_2$ and $S_c$ are asymptotically stable. 
\end{definition}

Note that in the case of linear static state feedback controllers, this definition simplifies to existence of $\mathbf{K}$ such that  $\max_{i\in\{1,2\}} \{ \rho(\A_i + \B_i \mathbf{K})\} <1$, where $\rho(\cdot)$ denotes the spectral radius of a square matrix. Similarly, the pair is said to be co-stabilizable by a linear dynamic state feedback if there exist a dimension $p\in \mathbb{Z}_{\geq0}$ and feedback gains $(\K,\Hb,\G,\F)$ such that $\max_{i\in\{1,2\}} \{ \rho(\D_{\A_i,\B_i})\}<1$.






\section{Linear Static State Feedback vs. Linear Dynamic State Feedback}
\label{sec_sta_vs_dyna}

\begin{table}[t]
\centering
\caption{Co-Stabilizability Summary ($|a_1|,|a_2| > 1, b > 0$)}
\label{tab:Co-Stabilizability-Summary}
\renewcommand{\arraystretch}{1.1}
\setlength{\tabcolsep}{2pt}
\begin{tabular}{|c|c|c|c|}
\hline
\textbf{Cases}   & \textbf{Static } & \textbf{Dynamic} \\
\hline
Pair \eqref{eq_two_scalar_sys_gaussian} with $|a_1-a_2|>2,\, b_1=b_2=b$ 
 & \ding{55} (Thm.~\ref{thm_three_static_vs_dynamic}) & \checkmark (Thm.~\ref{thm_three_static_vs_dynamic}) \\
\hline
Pair \eqref{eq_two_scalar_sys_gaussian} with $a_1=a_2=a, \tfrac{b_1}{b_2} >\tfrac{a+1}{a-1}$ 
 & \ding{55} (Thm.~\ref{thm_three_static_vs_dynamic}) & \checkmark (Thm.~\ref{thm_three_static_vs_dynamic}) \\
\hline
Pair \eqref{eq_two_scalar_sys_gaussian} with $a_1=a_2=a,\, b_1=-b_2=b$ 
 & \ding{55} (Thm.~\ref{thm_impossible_costabilization}) & \ding{55} (Thm.~\ref{thm_impossible_costabilization}) \\
 \hline
$\exp(n)$-hard pair  from \cite{zeng2023hardness} 
 & \ding{55} (\cite{zeng2023hardness}) & \checkmark (Thm.~\ref{thm_static_vs_dynamic_zeng_exp}) \\
\hline
$\exp(n)$-hard pair  from \cite{tsiamis2022learning}  
 & \ding{55} (\cite{tsiamis2022learning}) & \ding{55} (Thm.~\ref{coro_tasos}) \\
\hline
\end{tabular}
\end{table}

In this section, we compare the co-stabilization capabilities of linear static state feedback and linear dynamic state feedback controllers on scalar systems and $\exp(n)$-hard systems. We show that linear dynamic state feedback controllers are strictly more expressive in certain settings than linear static state feedback controllers.

\subsection{Scalar Systems}

We first consider a class of discrete-time scalar systems. Let
\begin{equation}
\label{eq_two_scalar_sys_gaussian}
\begin{aligned}
S_i &: \quad x_{t+1} = a_i x_t + b_i u_t,
\end{aligned}
\end{equation}
where $i\in\{1,2\}$, $x_t, u_t \in \mathbb{R}$, and $b_i \neq 0$ for $i=1,2$.

The following theorem shows that linear dynamic state feedback can strictly enlarge the class of co-stabilizable system pairs compared to linear static state feedback.

\begin{theorem}
\label{thm_three_static_vs_dynamic}
The following two scalar pairs cannot be co-stabilized by any linear static state feedback controller, but can be co-stabilized by a linear dynamic state feedback controller:
\begin{itemize}
    \item Systems in \eqref{eq_two_scalar_sys_gaussian} with $|a_1|,|a_2|>1$,  $|a_1 - a_2| > 2$, and $b_1 = b_2$;
    \item Systems in \eqref{eq_two_scalar_sys_gaussian} with $a_1 = a_2 = a > 1$ and 
   $
    \frac{b_1}{b_2} > \frac{a+1}{a-1}.
    $
\end{itemize}
\end{theorem}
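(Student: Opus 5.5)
For the static part, in the scalar case a static gain $k$ co-stabilizes exactly when $|a_i+b_ik|<1$ for $i=1,2$. I would reduce this to intersecting two open intervals.

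In the first bullet, $b_1=b_2=b$, so $k$ must lie in $\bigl((-1-a_i)/b,(1-a_i)/b\bigr)$ (with endpoints swapped if $b<0$). Each interval has length $2/|b|$ and centre $-a_i/b$. The centres are $|a_1-a_2|/|b|>2/|b|$ apart, so the intervals are disjoint.

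In the second bullet, $b_1/b_2>(a+1)/(a-1)>1$ forces $b_1,b_2$ to have the same sign. Replacing $k$ by $-k$ if needed, I may assume $b_1,b_2>0$. Then I need $k<(1-a)/b_1$ and $k>-(1+a)/b_2$, which is feasible iff $b_1/b_2<(a+1)/(a-1)$. This contradicts the hypothesis.

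For the dynamic part I would work with polynomials rather than state space. Take a controller of order $p=N-1$ realized in controllable canonical form, with transfer function $C(z)=\K+\Hb(zI-\F)^{-1}\G=n(z)/d(z)$, where $d=\det(zI-\F)$ is monic of degree $N-1$ and $\deg n\le N-1$. A Schur-complement computation on \eqref{aug_LTI_dy} gives
\[
\det(zI-\D_{a_i,b_i})=d(z)\bigl(z-a_i-b_iC(z)\bigr)=(z-a_i)d(z)-b_in(z)=:p_i(z).
\]
This holds for any $n,d$ of these degrees, with no coprimeness needed. So it suffices to find Schur, monic, degree-$N$ polynomials $p_1,p_2$ for which the resulting $d,n$ are polynomials of the right degrees. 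Once they are, $\F,\G,\Hb,\K$ are read off from the canonical realization.

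In the first bullet, $p_1-p_2=(a_2-a_1)d$. I therefore take $d=(p_1-p_2)/(a_2-a_1)$ and $n=\bigl((z-a_1)d-p_1\bigr)/b$. For $d$ to be monic of degree $N-1$, the $z^{N-1}$ coefficients $c_1,c_2$ of $p_1,p_2$ must satisfy $c_1-c_2=a_2-a_1$. Then $(z-a_1)d-p_1$ has degree at most $N-1$, so $C$ is proper. I would choose $p_1=(z+r)^N$ and $p_2=(z-r)^N$ with $r\in(0,1)$, and flip the roles if the sign of $a_2-a_1$ requires it. This gives $c_1-c_2=2Nr$. Picking $N>|a_1-a_2|/2$ and $r=|a_1-a_2|/(2N)<1$ makes both polynomials Schur.

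In the second bullet, $b_2p_1-b_1p_2=(b_2-b_1)(z-a)d$. This forces the interpolation constraint $p_1(a)/p_2(a)=b_1/b_2$, since $p_i(a)=-b_in(a)$. With this constraint, $d=(b_2p_1-b_1p_2)/\bigl((b_2-b_1)(z-a)\bigr)$ is a monic polynomial of degree $N-1$, and $n=\bigl((z-a)d-p_1\bigr)/b_1$ has degree at most $N-1$. Again I take $p_1=(z+r)^N$ and $p_2=(z-r)^N$, so the constraint becomes $\bigl((a+r)/(a-r)\bigr)^N=b_1/b_2$. Since $a>1>r$, the left side is continuous in $r\in(0,1)$ and increases from $1$ to $\bigl((a+1)/(a-1)\bigr)^N$. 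By the hypothesis $b_1/b_2>(a+1)/(a-1)$, I choose $N$ large enough that $\bigl((a+1)/(a-1)\bigr)^N>b_1/b_2$. The intermediate value theorem then gives a suitable $r$.

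The main obstacle is this second construction: recognizing that the unstable common pole imposes an interpolation condition at $z=a$, and that taking $N$ large is what makes the required ratio $b_1/b_2$ attainable by Schur polynomials. The rest is bookkeeping on degrees and the determinant identity.
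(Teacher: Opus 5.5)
Your proof is correct, and it takes a genuinely different route from the paper. The paper maps both plants to continuous time via $z=\frac{s+1}{s-1}$ and writes explicit coprime factorizations. It then forms $\hat P(s)$ and invokes Lemmas~\ref{lem_iff_strong_stab} and~\ref{lem_iff_co_stabilizable}: two plants are co-stabilizable iff $\hat P$ is strongly stabilizable, and strong stabilizability is decided by parity interlacing. This gives existence only, and the paper itself notes that it says nothing about the controller or its memory.

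You stay in discrete time instead. The Schur-complement identity $\det(zI-\mathbf{D}_{a_i,b_i})=(z-a_i)d(z)-b_in(z)$ turns co-stabilization into simultaneous assignment of two characteristic polynomials. Eliminating the shared pair $(n,d)$ leaves a single scalar constraint on two Schur polynomials: matching $z^{N-1}$ coefficients in the first family, and $p_1(a)/p_2(a)=b_1/b_2$ in the second. The choice $(z\pm r)^N$ meets it. Your argument is elementary and self-contained, since it needs no continuous-time lemmas and no care about zeros at infinity or the choice of factorization. It is also constructive, with explicit order $N-1$.

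It yields more than the theorem. For any order-$p$ realization, $n=\det(zI-\mathbf{F})\,C(z)$ is a polynomial of degree at most $p$. So the same identity gives matching necessary conditions, using two facts: the roots of a real monic Schur polynomial of degree $p+1$ sum to less than $p+1$ in modulus, and $a-1<|a-\lambda|<a+1$ whenever $|\lambda|<1$. Hence an order-$p$ co-stabilizer exists iff $|a_1-a_2|<2(p+1)$ in the first family, and iff $b_1/b_2<\bigl(\frac{a+1}{a-1}\bigr)^{p+1}$ in the second.

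In the first family with $p=1$, your construction gives $d(z)=z$. That is exactly the state-history controller of Proposition~\ref{lem_co_stab_gao_mem_2}, and the necessity argument shows its gap $4$ cannot be beaten by any first-order dynamic controller. Evaluating at $z=a$ also gives Theorem~\ref{thm_impossible_costabilization} directly. For two real monic Schur polynomials of equal degree and real $|a|\ge 1$, $p_1(a)/p_2(a)>0$, whereas there the ratio would have to equal $b_1/b_2=-1$.

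What the paper's route buys in exchange is a general test that needs no guessed construction. It reuses that test for the higher-dimensional pair in Theorem~\ref{thm_static_vs_dynamic_zeng_exp}.

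There is one slip to fix, in the static part of the second bullet. With $b_1>b_2>0$, the inequalities $k<(1-a)/b_1$ and $k>-(1+a)/b_2$ are always compatible, since they require only $b_1/b_2>(a-1)/(a+1)$. The incompatible pair is $k>-(1+a)/b_1$ (from $a+b_1k>-1$) and $k<(1-a)/b_2$ (from $a+b_2k<1$). Their joint feasibility is equivalent to $b_1/b_2<(a+1)/(a-1)$, which gives the contradiction you want.
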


Note that this result is existential and not constructive. In general, we do not know how much memory the dynamic state feedback controller would require. 
On the other hand, for some fixed small values of the memory, we can further quantify the advantage of dynamic controllers by characterizing the co-stabilization gap, i.e., the maximum distance in the parameter space that still allows co-stabilization.

\begin{proposition}
\label{lem_co_stab_gao_mem_2}
Consider $S_1$ and $S_2$ in \eqref{eq_two_scalar_sys_gaussian} with $a_1 \neq a_2$, $|a_1|,|a_2|>1$, and $b_1=b_2=b\neq 0$. Then:
\begin{itemize}
\item a co-stabilizing linear static state feedback controller exists if and only if $|a_1-a_2|<2$;
\item a co-stabilizing linear state-history feedback controller of the form $
u_t=k_0 x_t + k_1 x_{t-1}$
exists if and only if $|a_1-a_2|<4$.
\end{itemize}
\end{proposition}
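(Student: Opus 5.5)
Both parts reduce to checking when a family of polynomials can be made Schur stable by one choice of gains. The static part is an intersection of intervals. The memory-two part reduces to stability of a monic quadratic, analysed with the Jury conditions.

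\textbf{Static part.} With $u_t=kx_t$ the closed loops are $x_{t+1}=(a_i+bk)x_t$. Set $c=bk$, which ranges over all of $\Rb$ since $b\neq 0$. Co-stabilization then means there is $c$ with $|a_1+c|<1$ and $|a_2+c|<1$, i.e. the open intervals $(-a_1-1,-a_1+1)$ and $(-a_2-1,-a_2+1)$ intersect. Two open intervals of length $2$ intersect exactly when their centers are less than $2$ apart, so the condition is $|a_1-a_2|<2$.

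\textbf{History part: reduction.} Introduce $z_t=x_{t-1}$. The closed loop of $S_i$ has state $(x_t,x_{t-1})$ and matrix $\begin{bmatrix} a_i+bk_0 & bk_1\\ 1 & 0\end{bmatrix}$. Its characteristic polynomial is $\lambda^2-(a_i+c_0)\lambda-c_1$, where $c_j=bk_j$ again range freely over $\Rb$. Write $s_i=a_i+c_0$ and $q=-c_1$, so the polynomial is $\lambda^2-s_i\lambda+q$. By the Jury test it is Schur exactly when $|q|<1$ and $|s_i|<1+q$. For a fixed $q$, the admissible set for $c_0$ is therefore the open interval of half-width $1+q$ centered at $-a_i$.

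\textbf{History part: conclusion.} Co-stabilization is thus equivalent to the existence of $q\in(-1,1)$ such that the two intervals of half-width $1+q$ centered at $-a_1,-a_2$ intersect, i.e. $|a_1-a_2|<2(1+q)$. The supremum of $2(1+q)$ over $q\in(-1,1)$ is $4$ and is not attained. So if $|a_1-a_2|<4$, choose $q$ close enough to $1$ that $|a_1-a_2|<2(1+q)$, and take $c_0=-(a_1+a_2)/2$. Conversely, if $|a_1-a_2|\ge 4$, no $q<1$ works.

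The hypothesis $|a_i|>1$ is not needed; it only ensures that neither open-loop system is already stable. The main technical step is stating the second-order Jury conditions correctly, including their strict inequalities: $1-s+q>0$, $1+s+q>0$, $|q|<1$. The boundary case $|a_1-a_2|=4$ needs care because it forces $q\to1$, where $|q|<1$ fails. This is exactly why the criterion in the second part is strict.
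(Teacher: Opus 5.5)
Your proof is correct and follows essentially the same route as the paper. Both arguments reduce the memory-two case to the second-order Jury conditions, using that the constant coefficient $-bk_1$ is shared by the two characteristic polynomials while their linear coefficients differ only by $a_2-a_1$. The difference is in parametrization: you work directly in the coefficient coordinates $(c_0,q)$ and intersect intervals, whereas the paper parametrizes the gains by the closed-loop poles $p_1^{cl},p_2^{cl}$ of $S_1$. Your version makes the sufficiency direction and the strictness at $|a_1-a_2|=4$ explicit; the paper's displayed chain of inequalities leaves these implicit.
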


Proposition~\ref{lem_co_stab_gao_mem_2} shows that even a finite-memory dynamic controller significantly enlarges the admissible co-stabilization region compared to static feedback.
However, as Proposition~\ref{lem_co_stab_gao_mem_2} illustrates, finite controller memory may still lead to a bounded co-stabilization gap.
The general quantitative relationship between controller memory and achievable co-stabilization gap remains open.

We next present a pair of systems that cannot be co-stabilized even by linear dynamic state feedback.

\begin{theorem}
\label{thm_impossible_costabilization}
Consider $S_1$ and $S_2$ in \eqref{eq_two_scalar_sys_gaussian} with $a_1=a_2=a$, $|a|\geq 1$, and $b_1 = -b_2 = b > 0$. Then no linear static or linear dynamic state feedback controller can co-stabilize this pair.
\end{theorem}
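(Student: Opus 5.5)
The plan is to reduce the question to a transfer-function / characteristic-polynomial statement and then apply a parity (sign) argument to the closed-loop characteristic polynomials. For a general linear dynamic state feedback $(\K,\Hb,\G,\F)$ with memory $p$, the controller maps $x$ to $u$ through the scalar proper rational transfer function
\begin{equation*}
C(z)=K+H(zI-F)^{-1}G=\frac{n_c(z)}{d_c(z)},\qquad d_c(z)=\det(zI-F).
\end{equation*}
The closed-loop matrix $\D_{a,b}$ then has characteristic polynomial
\begin{equation*}
\chi_b(z)=\det(zI-\D_{a,b})=(z-a)\,d_c(z)-b\,n_c(z).
\end{equation*}
This follows from the Schur complement formula applied to the $(n+p)\times(n+p)$ block matrix: $\det(zI-F)\,\bigl(z-a-bC(z)\bigr)$. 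This identity holds even when $n_c/d_c$ is not coprime, because $n_c(z)=\det(zI-F)\,C(z)$ is then still a polynomial. Both $\chi_b$ and $\chi_{-b}$ are monic of degree $p+1$. The static case is simply $p=0$, where $\chi_{\pm b}(z)=z-a\mp bK$.

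The key step is to add the two characteristic polynomials, which cancels the controller numerator:
\begin{equation*}
\chi_b(z)+\chi_{-b}(z)=2(z-a)\,d_c(z).
\end{equation*}
Evaluate this at $z=a$. Suppose both closed loops were Schur stable. Then each $\chi_{\pm b}$ is a monic real polynomial with all roots in the open unit disk. Since $|a|\ge 1$, $\operatorname{sign}\chi_{\pm b}(a)$ is then determined purely by $\operatorname{sign}(a)$ and the degree. For $a\ge1$, each real factor $(a-\lambda)$ with $|\lambda|<1$ is positive, and each complex conjugate pair contributes $|a-\lambda|^2>0$, so $\chi_{\pm b}(a)>0$. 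For $a\le -1$, each factor $(a-\lambda)$ is negative and complex pairs are still positive, giving $\operatorname{sign}\chi_{\pm b}(a)=(-1)^{p+1}$. In particular $\chi_{\pm b}(a)\neq 0$, and the two values have the \emph{same} strict sign.

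Now compare with the two individual identities at $z=a$:
\begin{equation*}
\chi_b(a)=-b\,n_c(a),\qquad \chi_{-b}(a)=+b\,n_c(a).
\end{equation*}
These two values have opposite signs, or are both zero. Either case contradicts the conclusion that they are both nonzero with the same sign. Hence no controller of any memory $p$ co-stabilizes the pair. Since static feedback is the special case $p=0$, it is ruled out as well.

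The main obstacle is making the characteristic-polynomial identity rigorous for arbitrary $(\F,\G,\Hb)$, including the possibility that $zI-F$ is singular at $z=a$. I would handle it by proving the identity as a polynomial identity in $z$: it holds for all $z$ outside the spectrum of $F$, and both sides are polynomials, so it holds everywhere. The boundary case $|a|=1$ needs no separate treatment, because the sign argument only uses that the roots lie strictly inside the unit disk.
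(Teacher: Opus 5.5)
Your proof is correct, and it takes a genuinely different and more elementary route than the paper.

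\textbf{The paper's route.} It maps both plants to continuous time via $z=\frac{s+1}{s-1}$ and writes down explicit coprime factorizations. It then invokes the simultaneous-stabilization result (Lemma~\ref{lem_iff_co_stabilizable}) to reduce co-stabilizability to strong stabilizability of $\hat P$. Finally, it applies the parity interlacing property (Lemma~\ref{lem_iff_strong_stab}) to show that $\hat P$ has an odd number of real poles between its nonnegative real zeros $1$ and $\frac{a+1}{a-1}$. Along the way it treats $a=\pm 1$ separately and asserts (``it can be checked'') that the conclusion does not depend on the chosen factorizations.

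\textbf{Your route.} You work directly with the closed-loop matrix. The polynomial identity $\chi_{\pm b}(z)=(z-a)d_c(z)\mp b\,n_c(z)$ gives $\chi_b(a)\chi_{-b}(a)=-b^2n_c(a)^2\le 0$. On the other hand, Schur stability of two monic real polynomials of the same degree $p+1$ forces $\chi_b(a)$ and $\chi_{-b}(a)$ to be nonzero with a common sign whenever $|a|\ge 1$.

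\textbf{What your approach buys.} The argument is self-contained: it needs no bilinear transform, no factorization-invariance claim, and no appeal to the transfer-function characterization of internal stability. It handles non-minimal controller realizations and the boundary case $|a|=1$ uniformly; for $a=1$ the paper's transformed plant $P^c_i(s)=b_i(s-1)/2$ is not even proper. It also extends immediately to any pair $(a,b_1),(a,b_2)$ with $|a|\ge 1$ and $b_1b_2<0$.

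\textbf{What the paper's approach buys.} It gives a single necessary-and-sufficient framework, which the paper reuses to prove the existence results in Theorems~\ref{thm_three_static_vs_dynamic} and~\ref{thm_static_vs_dynamic_zeng_exp}. A one-directional sign obstruction like yours cannot supply those. In effect, your evaluation at $z=a$ is the explicit discrete-time content of the parity-interlacing failure that the paper detects in $\hat P$.
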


This result shows that dynamic state feedback does not universally overcome structural obstructions to co-stabilization.

\subsection{$\exp(n)$-Hard Systems}

This naturally raises the question of whether dynamic state feedback can remove the $\exp(n)$ hardness of learning to stabilize, which was established for static state feedback in \cite{zeng2023hardness,tsiamis2022learning}.
Consider the parametrized pair:
\begin{equation}
\label{eq_zeng}
{\small
\begin{aligned}
S_i:\quad
\mathbf{A} &=
\begin{bmatrix}
r & \alpha_i v & 0 & \cdots & 0 \\
0 & 0 & v & \cdots & 0 \\
& & \ddots & \ddots & \\
0 & 0 & 0 & \cdots & v \\
0 & 0 & 0 & \cdots & 0
\end{bmatrix},
\quad
\mathbf{B}_i =
\begin{bmatrix}
b^{(i)} \\
0 \\
\vdots \\
0 \\
v
\end{bmatrix},
\end{aligned}}
\end{equation}
where $i\in\{3,4\}$ and $n\geq 2$.

In the construction of $\exp(n)$-hard systems in \cite{zeng2023hardness}, co-stabilization by a linear static state feedback controller requires the system parameters to be exponentially close in the state dimension $n$.
\begin{theorem}
\label{thm_static_vs_dynamic_zeng_exp}
For the pair in \eqref{eq_zeng} with $r>1$, $\alpha_3=\alpha_4=1$, $0<v<\frac{r-1}{2}$, $b^{(3)}=0$, $b^{(4)}=\bar{b}$, and
$
\bar{b} > \left(\frac{2v}{r-1}\right)^n$, 
no linear static state feedback controller can co-stabilize the systems, whereas a linear dynamic state feedback controller can.
\end{theorem}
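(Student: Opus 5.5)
The plan has two parts. For the negative part I will simply invoke the corresponding result of \cite{zeng2023hardness}: with $\alpha_3=\alpha_4=1$, $b^{(3)}=0$, $b^{(4)}=\bar b$ and $0<v<\frac{r-1}{2}$, any static gain $\K$ co-stabilizing the pair forces $\bar b\le(\frac{2v}{r-1})^n$. If a self-contained argument is wanted, I would recompute it. Write $\K=[k_1,\dots,k_n]$ and expand $\det(z\I-\A_i-\B_i\K)$ along the chain structure. The two characteristic polynomials differ by a term proportional to $\bar b$. Then use a coefficient bound (Vieta / Jury-type) for Schur polynomials: the constant coefficient and the coefficient tied to $k_1$ are controlled by the product of roots. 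This forces $\bar b$ to be at most $(2v/(r-1))^n$.

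For the positive part I will work in the frequency domain. The dynamics are a pure delay chain $x_n\leftarrow vu$, $x_{k}\leftarrow v x_{k+1}$, feeding $x_1$ through $v x_2$. Propagating this chain gives
\begin{equation*}
\frac{x_1}{u}(z)=P_i(z)=\frac{v^n+b^{(i)}z^{n-1}}{z^{n-1}(z-r)},
\end{equation*}
and every other state is a known stable delay of $u$. So the only unstable mode in either system is $z=r$, and it is observed through $x_1$. Hence it suffices to find an output-feedback controller $u=C(z)x_1$ that internally stabilizes both SISO plants $P_3$ and $P_4$. Such a controller is a special linear dynamic state feedback: $\K,\Hb$ read only $x_1$, and $\F,\G,\Hb$ give a state-space realization of $C$. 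The delay-chain states are stable and driven by $u$, so internal stability of the $x_1$ loop gives stability of $\D_{\A_i,\B_i}$.

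To construct $C$, I will use stable coprime factorizations over proper stable rational functions in $z$ (stability region $|z|\ge1$, including $z=\infty$). Take
\begin{equation*}
D=\frac{z-r}{z},\qquad N_3=\frac{v^n}{z^n},\qquad N_4=N_3+\frac{\bar b}{z}.
\end{equation*}
Both plants share the same $D$. By the Vidyasagar--Youla reduction of simultaneous stabilization of two plants used in \cite{sanchez1998robust,doyle2013feedback}, the pair is co-stabilizable iff an associated single plant built from the Bezout solution $N_3X+DY=1$ is strongly stabilizable. Because the denominators coincide, the associated plant is essentially $(N_4-N_3)X/(\text{unit})=\frac{\bar b}{z}X$ up to units. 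Strong stabilizability is then decided by the parity interlacing property: between any two real zeros of the associated plant in $\{|z|\ge1\}\cup\{\infty\}$ there must be an even number of real unstable poles. I will compute $X$ explicitly; $X$ can be taken to be a constant times a polynomial in $z^{-1}$ chosen so that $D$ absorbs the remainder at $z=r$. Then I will check that the real unstable zeros of $\frac{\bar b}{z}X$ (the zero at $\infty$, and any zero of $X$ on $[1,\infty)$ or $(-\infty,-1]$) are not separated by an odd number of real unstable poles. This verifies PIP, so a stabilizing $C$ exists and the dynamic controller follows.

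The main obstacle is this PIP verification. The sign and placement of the real zeros of $X$ depend on $r$, $v$, $n$, and $\bar b$, and I must check that for every $\bar b>0$ no odd interlacing occurs; the condition $\bar b>(\frac{2v}{r-1})^n$ should only matter for the static obstruction. My first attempt is to choose $X$ as the constant $X=r^n/v^n$. This satisfies $N_3X+DY=1$ with $Y=(1-r^nz^{-n})z/(z-r)$, which is a polynomial in $z^{-1}$ and hence stable. Then the associated plant has only the zero at $\infty$ and no other unstable real zeros. PIP holds trivially, so strong stabilization, and hence co-stabilization, follows.
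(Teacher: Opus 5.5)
Your negative part matches the paper, since both simply cite \cite{zeng2023hardness}. Your positive-part strategy is also the paper's, carried out directly in discrete time: reduce to the SISO plants $u\mapsto x_1$, apply the Vidyasagar reduction, and check parity interlacing. Your Bezout pair is correct, and $Y=\sum_{k=0}^{n-1}(r/z)^k$ is indeed FIR.

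The error is in the associated plant. With the common factor $D_3=D_4=D$, the reduction gives
\[
\hat P=\frac{N_4D-N_3D}{N_4X+DY}=\frac{(N_4-N_3)\,D}{1+(N_4-N_3)\,X}.
\]
It is not ``$(N_4-N_3)X$ up to units''. The shared factor $D$, which vanishes at the unstable pole $z=r$, sits in the numerator. The factor $X$ enters the denominator, which is in general not a unit. For your choice $X=r^n/v^n$ you get $\hat P(z)=\frac{\bar b\,(z-r)}{z\,(z+\bar b r^n/v^n)}$. This plant has two unstable real zeros, $z=r$ and $z=\infty$. Under the hypothesis $\bar b r^n/v^n>(\frac{2r}{r-1})^n>1$, it also has an unstable real pole at $z=-\bar b r^n/v^n$. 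So your final claim, that the associated plant has only the zero at $\infty$ and PIP ``holds trivially'', is false. The interlacing between $z=r$ and $z=\infty$ is exactly where co-stabilization can fail. Compare Theorem~\ref{thm_impossible_costabilization}: there the same computation gives $\hat P=\frac{-2b(z-a)}{z(z-2a)}$, whose pole $2a$ lies between the zeros $a$ and $\infty$.

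The repair is short. On the unstable real arc $[1,+\infty]\cup[-\infty,-1]$ (joined at $\infty$), the segment between the zeros $r$ and $\infty$ is $(r,+\infty)$. Neither pole, $0$ or $-\bar b r^n/v^n$, lies there. Equivalently, $1+\frac{\bar b}{z}X$ equals $1$ at $z=\infty$ and $1+\bar b r^{n-1}/v^n>0$ at $z=r$. Hence PIP holds for every $\bar b>0$, and the hypothesis on $\bar b$ matters only for the static obstruction, as you guessed.

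This is the paper's check in disguise. Under $z=\frac{s+1}{s-1}$ one has $\frac{s-1}{s+1}=\frac1z$, so:
\begin{itemize}
\item the paper's $N_3$ and $M_3$ are exactly your $v^n z^{-n}$ and $(z-r)/z$;
\item its $X_3$ differs from your constant $X$ by the constant multiple $\frac{r^n}{(r-1)v^n}$ of $D$;
\item its zeros $s=1$ and $s=\frac{r+1}{r-1}$ are your $z=\infty$ and $z=r$;
\item $Q(s)=(s+1)^2\bigl(1+\frac{\bar b}{z}X_3\bigr)$, so its check that $Q(1)>0$ and $Q(\frac{r+1}{r-1})>0$ is the same sign condition.
\end{itemize}
Once corrected, your discrete-time computation with a constant $X$ is a cleaner version of the paper's argument, since it avoids the explicit coefficients $b_k$.
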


Theorem~\ref{thm_three_static_vs_dynamic}, together with Theorem~\ref{thm_static_vs_dynamic_zeng_exp}, demonstrates that linear dynamic state feedback is strictly more expressive than linear static state feedback in the co-stabilization problem. However, this phenomenon does not extend to all $\exp(n)$-hard instances. In particular, combining Theorem~\ref{thm_impossible_costabilization} with the construction of $\exp(n)$-hard systems in \cite{tsiamis2022learning}, we obtain the following.

\begin{theorem}
\label{coro_tasos}
For the pair in \eqref{eq_zeng} with $r>1$, $\alpha_3=1$, $\alpha_4=-1$, $0<v<1$, and $b^{(3)}=b^{(4)}=0$, no linear dynamic state feedback controller can co-stabilize the systems.
\end{theorem}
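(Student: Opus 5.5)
The plan is to use the same sign (parity-interlacing) obstruction that underlies Theorem~\ref{thm_impossible_costabilization}. The idea is to evaluate the closed-loop characteristic polynomials of the two interconnections at the real point $z=r>1$ and show that they must have opposite signs. Since a Schur-stable real monic polynomial is strictly positive on $[1,\infty)$, this rules out co-stabilization. A direct reduction to the scalar theorem is less convenient, because the controller sees all of $\x$ and not only $x_1$. So I would redo the sign argument in this setting rather than quote Theorem~\ref{thm_impossible_costabilization} verbatim.

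\textbf{Step 1 (open-loop transfer function).} With $b^{(3)}=b^{(4)}=0$ we have $\B_3=\B_4=\B=[0,\dots,0,v]^\top$. The chain structure gives $x_{k}=v^{\,n-k+1}z^{-(n-k+1)}u$ for $k\ge 2$, and
\begin{equation*}
x_1=\alpha_i\, \frac{v^{\,n-1}z^{-(n-1)}}{z-r}\,u .
\end{equation*}
Hence $P_i(z)=(z\I-\A_i)^{-1}\B$ is the same for $i=3,4$ except that its first entry flips sign. The open-loop characteristic polynomial is $\det(z\I-\A_i)=z^{n-1}(z-r)$ for both systems.

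\textbf{Step 2 (closed-loop characteristic polynomial).} Write the controller transfer function as $C(z)=\K+\Hb(z\I-\F)^{-1}\G=[c_1(z),\,c_{\mathrm{rest}}(z)]$. A Schur-complement / determinant identity for $\D_{\A_i,\B}$ gives
\begin{equation*}
\chi_i(z):=\det(z\I-\D_{\A_i,\B})=\det(z\I-\F)\,z^{n-1}(z-r)\bigl(1-C(z)P_i(z)\bigr).
\end{equation*}
The identity holds as polynomials, extended by continuity where $\det(z\I-\F)=0$. Split $C P_i=\alpha_i c_1(z)g(z)+c_{\mathrm{rest}}(z)P_{\mathrm{rest}}(z)$ with $g(z)=v^{n-1}z^{-(n-1)}/(z-r)$. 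Every term in $\chi_i$ except the one containing $g$ carries the factor $(z-r)$. Setting $N_1(z):=\det(z\I-\F)c_1(z)$, which is a polynomial, we therefore get
\begin{equation*}
\chi_i(r)=-\alpha_i\, v^{\,n-1} N_1(r).
\end{equation*}

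\textbf{Step 3 (contradiction).} If the controller co-stabilizes both systems, each $\chi_i$ is a real monic polynomial with all roots in the open unit disk. Hence $\chi_i(z)>0$ for every real $z\ge 1$, and in particular $\chi_3(r)>0$ and $\chi_4(r)>0$ since $r>1$. But $\alpha_3=-\alpha_4$ and $v\neq 0$ give $\chi_3(r)=-\chi_4(r)$, which is impossible. The static case is contained in this argument as $p=0$ (so $\det(z\I-\F)\equiv 1$), or follows via Remark~\ref{rem_relationship}.

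The main obstacle I expect is Step~2: verifying rigorously the determinant formula and the claim that, at $z=r$, only the $c_1$ term survives. This must hold even when $\F$ has an eigenvalue at $r$ or when $C$ has poles that cancel with the plant. I would handle it by working directly with the block matrix: expand $\det(z\I-\D_{\A_i,\B})$ along the first row of the $\x$-block. The entry $z-r$ multiplies a cofactor that is common to $i=3,4$. The only other contribution comes from the $(1,2)$ entry $-\alpha_i v$, whose cofactor again does not depend on $i$, because the rest of the matrix is identical for the two systems. This gives $\chi_i(z)=(z-r)Q(z)+\alpha_i R(z)$ with $Q$ and $R$ independent of $i$. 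Evaluating at $z=r$ then directly yields $\chi_3(r)=-\chi_4(r)$, which is all the sign argument needs, and it avoids transfer-function cancellation issues entirely.
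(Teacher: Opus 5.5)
Your proof is correct, and it takes a genuinely different route from the paper.

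The paper's route has three steps:
\begin{itemize}
\item it reduces $S_3,S_4$ to the scalar delayed pair $x^{(1)}_{t+1}=r x^{(1)}_t\pm v^n u_{t-n+1}$;
\item it absorbs the delay chain into the controller, so that any co-stabilizer of this pair becomes a co-stabilizer of the undelayed pair $x_{t+1}=r x_t\pm v^n u_t$ (the two closed-loop matrices coincide up to relabeling $\boldsymbol{\eta}_t\leftrightarrow\boldsymbol{\delta}_t$);
\item it invokes Theorem~\ref{thm_impossible_costabilization}, whose proof uses the bilinear transform, coprime factorizations, Lemma~\ref{lem_iff_co_stabilizable} and the parity-interlacing criterion of Lemma~\ref{lem_iff_strong_stab}.
\end{itemize}

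You instead expand $\det(z\I-\D_{\A_i,\B})$ along its first row. Because $n\ge 2$ forces the first entry of $\B$ to vanish, that row is $[\,z-r,\ -\alpha_i v,\ 0,\dots,0\,]$, and every other row is independent of $i$. Hence $\chi_i(z)=(z-r)C_{11}(z)-\alpha_i v\,C_{12}(z)$ with $i$-independent cofactors, so $\chi_3(r)=-\chi_4(r)$. This contradicts $\chi_i(r)=\prod_j(r-\lambda_j)>0$ for Schur-stable $\chi_i$.

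This is the elementary core of the parity-interlacing obstruction, and it buys you several things:
\begin{itemize}
\item It is self-contained.
\item It handles controllers that use the full state $\x_t$ directly. The paper writes the controller as driven by $x^{(1)}_t$ alone and asserts the equivalence with the reduced problem without detailed justification, which is exactly the issue you flagged.
\item It needs no claim that strong stabilizability of $\hat P$ is independent of the chosen coprime factorizations.
\item It uses only $v\neq 0$ and $|r|\ge 1$; for $r\le -1$, compare the signs of $(-1)^{n+p}\chi_i(r)$ instead.
\end{itemize}

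The paper's route buys uniformity in exchange. It stays inside the necessary-and-sufficient framework of Lemma~\ref{lem_iff_co_stabilizable}, which also yields the positive results in Theorems~\ref{thm_three_static_vs_dynamic} and \ref{thm_static_vs_dynamic_zeng_exp}. Its delay-absorption step is also a reusable fact about delayed plants.

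One small slip in your Steps~1--2. Since $x_2=v^{n-1}z^{-(n-1)}u$ and $(z-r)x_1=\alpha_i v\,x_2$, the first entry of $P_i$ is $\alpha_i v^{n}z^{-(n-1)}/(z-r)$. So $v^{n-1}$ should read $v^{n}$ there and in $\chi_i(r)$. This is harmless, because only $v\neq0$ and the sign flip matter, and your cofactor version avoids it anyway.
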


Therefore, linear dynamic state feedback might not eliminate the $\exp(n)$ hardness of learning to stabilize established in \cite{tsiamis2022learning,zeng2023hardness}. 


\section{Co-Stabilization by Path Integral with State-History Feedback}
\label{sec_algorithm_path_integral}

In the previous section, we established that linear dynamic state feedback is strictly more expressive than static feedback for co-stabilization. We now turn to the algorithmic problem of computing such controllers.

Finding a co-stabilizing controller for a given set of systems is, in general, hard  {\cite{blondel1993simultaneous}}. Existing work based on linear matrix inequalities (LMIs) provides sufficient conditions by requiring a shared Lyapunov function (see, e.g., Section 7.2.3 in \cite{blanchini2008set}). Policy gradient methods have recently emerged as an alternative to LMIs, sometimes with global or local convergence guarantees \cite{hu2023toward, fujinami2025policy}. In this section, inspired by the path-integral policy search framework developed in \cite{williams2017model, yi2024covo}, and the stabilization strategy proposed in \cite{zhao2024convergence}, we propose a path integral algorithm for computing co-stabilizing controllers for a finite set of linear systems. 

We first present our algorithm for linear static state feedback controllers. Then we will show that based on the reparametrizations,  the same algorithm can be used to search for dynamic controllers. 
{\begin{remark}
    The previous section focuses on the theoretical characterization of co-stabilization of two systems, reflecting the inherent limitations of existing analytical tools in robust control \cite{blondel1993simultaneous}. In contrast, the algorithm developed here is applicable to co-stabilize multiple systems.
\end{remark}}
Consider a finite collection $\left\{
(\mathbf{A}_i,\mathbf{B}_i)
\right\}_{i\in[M]}$ of linear systems. For a fixed feedback gain $\mathbf{K}$ and discount factor $\gamma\in(0,1)$,
define the infinite-horizon discounted federated LQR cost
$
J_{\mathrm{Co}}(\gamma,\mathbf{K})
:=
\frac{1}{M}
\sum_{i=1}^{M}
J(\gamma,\mathbf{K},\mathbf{A}_i,\mathbf{B}_i),
$
where
\begin{equation}
\label{eq:discounted_LQR_clean}
\begin{aligned}
J(\gamma,\mathbf{K},\mathbf{A}_i,\mathbf{B}_i)
:=
\mathbb{E}_{\mathbf{x}_0}
\left[
\sum_{t=0}^{\infty}
\left(
\mathbf{x}_t^\top \Q \mathbf{x}_t
+
\mathbf{u}_t^\top \mathbf{R}\mathbf{u}_t
\right)
\right],
\end{aligned}
\end{equation}
subject to
\begin{equation}
\label{discounted_lqr_sys}
\mathbf{u}_t = \mathbf{K}\mathbf{x}_t,
\qquad
\mathbf{x}_{t+1}
=
\sqrt{\gamma}
(\mathbf{A}_i+\mathbf{B}_i\mathbf{K})
\mathbf{x}_t,
\end{equation}
with $\mathbb{E}[\mathbf{x}_0\mathbf{x}_0^\top]=\mathbf{I}_n$,
$\Q\succeq 0$, and $\mathbf{R}\succ 0$.
For later use, define
\begin{equation}
\overline{J}(\gamma,\mathbf{K})
:=
\max_{i\in[M]}
J(\gamma,\mathbf{K},\mathbf{A}_i,\mathbf{B}_i).
\end{equation}
For each system and fixed gain $\mathbf{K}$,
define $\mathbf{P}^{i,\gamma}_{\mathbf{K}}$
and $\boldsymbol{\Sigma}^{i,\gamma}_{\mathbf{K}}$
as the unique positive semidefinite solutions of the discrete Lyapunov equations
\begin{equation}
\label{eq:def_P_clean}
\mathbf{P}^{i,\gamma}_{\mathbf{K}}
=
\Q
+
\mathbf{K}^\top\mathbf{R}\mathbf{K}
+
\gamma
(\mathbf{A}_i+\mathbf{B}_i\mathbf{K})^\top
\mathbf{P}^{i,\gamma}_{\mathbf{K}}
(\mathbf{A}_i+\mathbf{B}_i\mathbf{K}),
\end{equation}
and
\begin{equation}
\boldsymbol{\Sigma}^{i,\gamma}_{\mathbf{K}}
=
\mathbf{I}_n
+
\gamma
(\mathbf{A}_i+\mathbf{B}_i\mathbf{K})
\boldsymbol{\Sigma}^{i,\gamma}_{\mathbf{K}}
(\mathbf{A}_i+\mathbf{B}_i\mathbf{K})^\top.
\end{equation}

Under the assumption
$\mathbb{E}[\mathbf{x}_0\mathbf{x}_0^\top]=\mathbf{I}_n$,
Lemma~1 of \cite{fazel2018global} implies
$
J(\gamma,\mathbf{K},\mathbf{A}_i,\mathbf{B}_i)
=
\operatorname{trace}
\left(
\mathbf{P}^{i,\gamma}_{\mathbf{K}}
\right).
$

Hence, the federated objective in \eqref{eq:discounted_LQR_clean} can be evaluated by solving
$M$ Lyapunov equations.

We now present the co-stabilization path-integral algorithm,
summarized in Alg.~\ref{alg:CS_PI_mppi_update_simplified}, where $\underline{\sigma}$ denotes the least singular value of a matrix.
The inner loop performs a zeroth-order stochastic descent
via exponential reweighting of sampled perturbations, which follows the path integral idea from \cite{williams2017model, yi2024covo}. The update of $\alpha_j$ follows the stabilization idea in \cite{zhao2024convergence}.
Unstable candidate gains are penalized by assigning
a large cost $J_{\max}$.

\begin{algorithm}[htbp]
\caption{Co-Stabilization by Path Integral}
\label{alg:CS_PI_mppi_update_simplified}
\begin{algorithmic}
\Require
Systems $\{(\mathbf{A}_i,\mathbf{B}_i)\}_{i=1}^M$; 
$\mathbf{Q}\succeq 0$, $\mathbf{R}\succ 0$; 
initial gain $\mathbf{K}_0$; 
$\gamma_0\in(0,1)$; 
$\lambda>0$; 
$r>0$; 
$T$; $N$; 
$\xi\in(0,1)$; $\underline{\alpha}$;
$J_{\max}>0$.
\Ensure
Returns the feasibility flag and gain.

\State Initialize $\mathbf{K}_0$, $\gamma_0$.
\For{$j=0,1,2,\dots$}
    \State $\widehat{\mathbf{K}}^{(j)}_0 \gets \mathbf{K}_j$
    \For{$t=0,\dots,T-1$}
        \For{$n=1,\dots,N$}
            \State Uniformly sample $\Delta_{\mathbf{K},n}$ s.t. $\|\Delta_{\mathbf{K},n}\|_F=r$
            \State $\mathbf{K}_n \gets \widehat{\mathbf{K}}^{(j)}_t + \Delta_{\mathbf{K},n}$
            \If{$\rho(\sqrt{\gamma_j}(\mathbf{A}_i+\mathbf{B}_i\mathbf{K}_n))\ge 1$ for some $i$}
                \State $J_n \gets J_{\max}$
            \Else
                \State $J_n \gets J_{\mathrm{Co}}(\gamma_j,\mathbf{K}_n)$
            \EndIf
        \EndFor
        \State $J_{\min}\gets\min_n J_n$
        \State $w_n\gets\exp\!\left(-\frac{J_n-J_{\min}}{\lambda}\right)$, $v_n\gets \frac{w_n}{\sum_k w_k}$
        \State
        $
        \widehat{\mathbf{K}}^{(j)}_{t+1}
        \gets
        \widehat{\mathbf{K}}^{(j)}_t
        +
        \sum_{n=1}^N
        v_n
        \Delta_{\mathbf{K},n}
        $
    \EndFor
    \State $\mathbf{K}_{j+1}\gets\widehat{\mathbf{K}}^{(j)}_T$
    \State
    $
    \alpha_j
    \gets
    \frac{
    \underline{\sigma}(\mathbf{Q}+\mathbf{K}_{j+1}^\top\mathbf{R}\mathbf{K}_{j+1})
    }{
    \overline{J}(\gamma_j,\mathbf{K}_{j+1})
    -
    \underline{\sigma}(\mathbf{Q}+\mathbf{K}_{j+1}^\top\mathbf{R}\mathbf{K}_{j+1}) 
    }
    $
    \State
    $
    \gamma_{j+1}
    \gets
    (1+\xi\alpha_j)\gamma_j
    $
    \If{$\underset{i\in [M]}{\max}\rho(\mathbf{A}_i+\mathbf{B}_i\mathbf{K}_{j+1})<1$ }
        \State \Return $(\mathrm{True},\mathbf{K}_{j+1})$
    \EndIf
     \If{$\gamma_{j+1}\ge 1$ \textbf{or} $\alpha_{j}\le \underline{\alpha}$ }
        \State \Return $(\mathrm{False},\mathbf{K}_{j+1})$
    \EndIf
\EndFor
\end{algorithmic}
\end{algorithm}

Next, we show how to use Alg.~\ref{alg:CS_PI_mppi_update_simplified} to get a co-stabilizing linear state-history feedback controller.
For any $h\ge 1$, define the history-augmented state
\begin{equation}
\mathbf{z}_t=
\begin{bmatrix}
\mathbf{x}_t^\top\; \mathbf{x}_{t-1}^\top\; \cdots\; \mathbf{x}_{t-h+1}^\top
\end{bmatrix}^\top\in\mathbb{R}^{hn}.
\end{equation}
Then we have the lifted dynamics
\begin{equation}
\mathbf{z}_{t+1}=\mathbf{A}_{i}^{hi}\mathbf{z}_t+\mathbf{B}_{i}^{hi}\mathbf{u}_t, \qquad i\in [M],
\end{equation}
where
\begin{equation}
\label{def_A_B_his}
\mathbf{A}_{i}^{hi}=
\begin{bmatrix}
\mathbf{A}_i & 0 & \cdots & 0\\
\mathbf{I}_n & 0 & \cdots & 0\\
0 & \mathbf{I}_n & \ddots & \vdots\\
\vdots & & \ddots & 0\\
0 & \cdots & \mathbf{I}_n & 0
\end{bmatrix},
\qquad
\mathbf{B}_{i}^{hi}=
\begin{bmatrix}
\mathbf{B}_i\\ 0\\ \vdots\\ 0
\end{bmatrix}.
\end{equation}

Then, by the reparametrizations for linear  state-history feedback controllers in \eqref{def_A_B_his}, co-stabilization of linear state-history feedback reduces to
linear static state feedback co-stabilization of the augmented systems,
to which Alg.~\ref{alg:CS_PI_mppi_update_simplified} applies directly.

\begin{remark}[Limits of LMI-based Co-Stabilization with Memory]
Although linear dynamic state feedback and state-history feedback can strictly enlarge the set of co-stabilizable systems, this advantage is not captured by standard LMI-based approaches based on a common quadratic Lyapunov function in \cite{blanchini2008set}. In particular, it can be proved that augmenting the system with linear dynamic state feedback or state-history feedback does not enlarge the feasibility region of the corresponding co-stabilization LMI. 
This highlights a fundamental limitation of LMI-based methods: while dynamic controllers provide additional expressive power for co-stabilization, convex formulations based on common Lyapunov functions fail to exploit this benefit.
\end{remark}

 \section{Experiments}
 \label{sec_exp}

We conduct three numerical experiments to validate the theoretical results developed in the previous sections. 
The first two experiments are to check how Alg.~\ref{alg:CS_PI_mppi_update_simplified} with linear state-history feedback\footnote{Empirically, we have not observed a substantial performance difference when using linear dynamic feedback versus linear state-history feedback with our algorithm. Hence, we restrict our experiments to the latter.} is affected the controller memory and the original state dimension. The third experiment is to compare Alg.~\ref{alg:CS_PI_mppi_update_simplified} with a co-stabilization policy gradient method from the literature.


\subsection{Horizon vs.\ Co-Stabilization Gap}
\label{subsec_mem_cos_exp}
We first investigate how increasing the horizon of a linear state-history controller enlarges the co-stabilization region.
Consider the scalar pair in \eqref{eq_two_scalar_sys_gaussian} with
$a_1=a_2=1.2$, $b_1=1$, and $b_2>0$.
We apply Alg.~\ref{alg:CS_PI_mppi_update_simplified} to compute a co-stabilizing linear state-history feedback controller for $\{(a_i,b_i)\}_{i=1,2}$.
For horizon $h$, the systems are lifted according to \eqref{def_A_B_his}.

The hyperparameters of Alg.~\ref{alg:CS_PI_mppi_update_simplified} are: $\K_0=\mathbf{0}$,
$
\mathbf{Q}^{\mathrm{his}}=\mathbf{I}_h,\quad
\mathbf{R}^{\mathrm{his}}=1$, $
\xi=0.99$, $
\lambda=10^{-4},\quad
r=1\times 10^{-1}$,
$
T=20$, 
$N=20$, 
$J_{\max}=10^{12}$,
and
$
\gamma_0
=
 1/{\left(1+\max\{1,\max_{i\in\{1,2\}}
\rho(\mathbf{A}^{\mathrm{his}}_i+\mathbf{B}^{\mathrm{his}}_i\widehat{\mathbf{K}}^{\mathrm{his}}_0)\}^2
\right)
}.
$

\paragraph{Feasibility criterion}
We declare Alg.~\ref{alg:CS_PI_mppi_update_simplified} feasible for $\{(\mathbf{A}_i,\mathbf{B}_i)\}_{i=1,2}$ if, upon termination (i.e., $\gamma_{j+1}\ge 1$ or $\alpha_j\le \underline{\alpha} = 10^{-6}$), the returned gain $\widehat{\mathbf{K}}^{\mathrm{his}}_{j+1}$ satisfies
\[
\rho(\mathbf{A}^{\mathrm{his}}_i+\mathbf{B}^{\mathrm{his}}_i\widehat{\mathbf{K}}^{\mathrm{his}}_{j+1})<1
\quad
\forall i\in\{1,2\}.
\]
Otherwise, the algorithm is declared infeasible.

\paragraph{Largest admissible gap}
For each horizon $h$, we apply a bisection procedure to determine the largest $\bar{b}_2$ such that Alg.~\ref{alg:CS_PI_mppi_update_simplified} remains feasible.
Fig.~\ref{fig:largest_b2_vs_h} plots  $\bar{b}_2$ as a function of $h$.

The results in Fig.~\ref{fig:largest_b2_vs_h} show that in our experiments, the estimated feasible region increases monotonically with horizon. 
In particular, state-history feedback ($h>1$) achieves a strictly larger $\bar{b}_2$ than static state feedback ($h=1$), empirically supporting Theorem~\ref{thm_three_static_vs_dynamic}.

\begin{figure}[t]
  \centering
  \includegraphics[width=0.8\columnwidth]{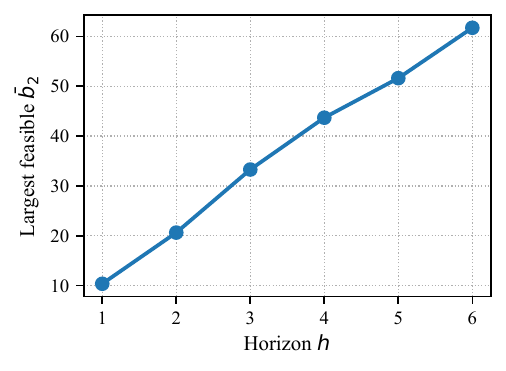}
  \caption{Largest feasible $b_2$ versus horizon $h$. In addition, the largest feasible $b_2$ by solving LMI with a common quadratic Lyapunov function is $10.99$. }
  \label{fig:largest_b2_vs_h}
  \vspace{-0.4cm}
\end{figure}

\subsection{Exponential Hardness in System Dimension}

We next examine how the co-stabilization gap scales with system dimension.
Consider the system in \eqref{eq_zeng} with parameters
$r=1.2$, $v=0.5$, $b^{(3)}=0$, and $b^{(4)}=\bar{b}>0$.
We vary the state dimension $n\in\{2,3,4,5\}$.

We compare three approaches: i)
 LMI-based common quadratic Lyapunov method, which is also employed in \cite{zeng2023hardness}, ii)
 linear static state-feedback co-stabilization using Alg.~\ref{alg:CS_PI_mppi_update_simplified} ($h=1$); and iii)
  linear state-history feedback co-stabilization using Alg.~\ref{alg:CS_PI_mppi_update_simplified} ($h=3$). The hyperparameters of Alg.~\ref{alg:CS_PI_mppi_update_simplified} follow the setup in Section \ref{subsec_mem_cos_exp}.
 
Fig.~\ref{fig:largest_b2_vs_n} shows $\bar{m}$ versus $n$.
While the chosen $v$ does not satisfy the conditions in Theorem~\ref{thm_static_vs_dynamic_zeng_exp}, all methods exhibit exponential decay of the feasible co-stabilization gap as $n$ increases, suggesting fixed-horizon linear state-history feedback may not remove the $\exp(n)$ hardness identified in \cite{zeng2023hardness}.

\begin{figure}[t]
  \centering
  \includegraphics[width=0.8\columnwidth]{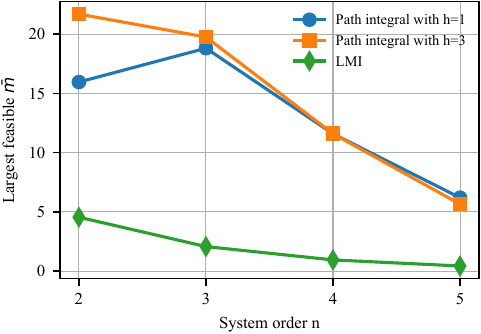}
  \caption{Largest feasible $\bar{b}$ versus system dimension $n$.}
  \label{fig:largest_b2_vs_n}
  \vspace{-0.2cm}
\end{figure}

\subsection{Path Integral vs.\ Policy Gradient}
 In this part, we compare our Alg. \ref{alg:CS_PI_mppi_update_simplified} with the co-stabilization policy gradient algorithm given in \cite{fujinami2025policy} for linear static state feedback ($h=1$) and linear state-history feedback with $h=3$. We implement the comparison experiment on the co-stabilization of two scalar systems and two single-input order-$2$ systems. The first one is the scalar pair in \eqref{eq_two_scalar_sys_gaussian} with
$a_1=a_2=1.2$ and $b_1,b_2$ are uniformly sampled from $[-1,11]$.  The second one is the discretized and
linearized inverted pendulum system:
\begin{equation}
\label{eq_linearized_in_pen}
\A=\left[\begin{array}{cc}
1 & d t \\
\frac{g}{\ell} d t & 1
\end{array}\right],\; \B=\left[\begin{array}{c}
0 \\
\frac{d t}{m \ell^2}
\end{array}\right],
\end{equation}
with known $dt=1e-4$ and $g=10$. For unknown $l$ and $m$, we independently sample two values from uniform distributions. Specifically, for each sample $i = 1,2$, we draw $m_i \sim \mathcal{U}[0.75, 1.25]$ and $\ell_i \sim \mathcal{U}[0.75, 1.25]$, and define the corresponding linearized inverted pendulum system $(\A_i, \B_i)$ by \eqref{eq_linearized_in_pen}. The hyperparameters of Alg.~\ref{alg:CS_PI_mppi_update_simplified} follow the setup in Section \ref{subsec_mem_cos_exp}. The co-stabilization policy gradient algorithm uses the original implementation in the GitHub repository of \cite{fujinami2025policy}. For each case, we run $20$ experiments independently to compute the co-stabilization success rates, which are summarized in Table \ref{tab:comparison}. Based on the table, we can see that the path-integral method in Alg. \ref{alg:CS_PI_mppi_update_simplified} is better than the co-stabilization policy gradient, and the controller memory increases the performance of Alg. \ref{alg:CS_PI_mppi_update_simplified} for the two implemented cases. A limitation of our Alg. \ref{alg:CS_PI_mppi_update_simplified} is that it is more time-expensive than the co-stabilization policy gradient.

 \begin{table}[t]
\centering
\caption{Success rates comparison.}
\vspace{-0.3cm}
\label{tab:comparison}
\setlength{\tabcolsep}{4pt}
\begin{tabular}{lcccc}
\toprule
 & \makecell{{\cite{fujinami2025policy}} \\ ($h=1$)} 
 & \makecell{{Alg. \ref{alg:CS_PI_mppi_update_simplified}} \\ ($h=1$)} 
 & \makecell{{\cite{fujinami2025policy}} \\ ($h=3$)} 
 & \makecell{{Alg. \ref{alg:CS_PI_mppi_update_simplified}} \\ ($h=3$)} \\
\midrule
Scalar & 0.15 & {\bf 0.80} & 0.05  & {\bf 0.90} \\
Inv.\ Pend. & 0.20  & {\bf 1.00} & 0.55  & {\bf 1.00} \\
\bottomrule
\end{tabular}
\vspace{-0.5cm}
\end{table}

\section{Conclusion and Future Work}
\label{sec_con}
In this paper, we study the role of controller memory in co-stabilization. We show that linear dynamic state feedback strictly enlarges the class of co-stabilizable systems compared to static feedback, while fundamental limitations remain.
On the algorithmic side, we develop a path-integral method for computing co-stabilizing controllers.
Future work includes understanding how the controller memory affects the sample complexity of learning-to-stabilize problems. We are also interested in investigating the convergence properties of the proposed path integral algorithm for co-stabilization.

\vspace{0.2cm}
\noindent {\bf Acknowledgments: } This work is supported in part by ONR grant
N00014-21-1-2431 (CLEVR-AI). NO would like to thank Constantino Lagoa for some inspiring discussions on co-stabilization.


\appendix

\subsection{Some Frequency Domain Preliminary Results }
\label{ape_fre_domin_tool}
In this section, we introduce some continuous-time frequency domain tools from the book \cite{doyle2013feedback}.
First, we need to introduce coprime factorization, which plays a critical role in feedback and robust control, followed by two useful lemmas.

\begin{definition}[Coprime Factorization]
Consider a continuous-time transfer function $P^c(s)$. A set of four stable, proper\footnote{A continuous-time transfer function is stable if its poles are all in the left-half complex plane and is proper if the degree of the numerator does not exceed the degree of the denominator.} transfer functions $N(s)$, $M(s)$, $X(s)$ and $Y(s)$ is called a coprime factorization of $P^c(s)$ if
$P^c(s)= N(s) / M(s)$ and
$
N(s) X(s)+M(s) Y(s)=1,
$.
\end{definition}

\begin{lemma}[{\cite[Theorem 3 of Chapter 5]{doyle2013feedback}}]
\label{lem_iff_strong_stab}
   A continuous-time transfer function $P^c(s)$ is strongly stabilizable if and only if it has an even number of real poles between every pair of non-negative real zeros.  
\end{lemma}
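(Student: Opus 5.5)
The plan is to reformulate strong stabilizability as a unit-interpolation problem in the ring $\mathcal{S}$ of stable proper (real-rational) transfer functions, using the coprime factorization just defined. Write $P^c = N/M$ with $N,M,X,Y\in\mathcal{S}$ and $NX+MY=1$. The first step is the standard parametrization fact: a controller $C\in\mathcal{S}$ internally stabilizes $P^c$ if and only if
\[
U := M + N C
\]
is a unit of $\mathcal{S}$, i.e.\ $U$ and $1/U$ are both stable and proper. Equivalently, $U$ has no zeros in the closed right half-plane, including the point $s=\infty$. Because $C$ itself is stable, it contributes no unstable poles to the closed loop, so the closed-loop characteristic function is just $M+NC$. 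This gives
\[
P^c \text{ strongly stabilizable} \iff \exists\, C\in\mathcal{S} \text{ such that } M+NC \text{ is a unit}.
\]
In the interpolation steps below, ``zeros of $P^c$'' means the closed-right-half-plane zeros of $N$, counted with multiplicity and including $\infty$ when $P^c$ is strictly proper.

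\emph{Necessity.} Suppose a stable stabilizing $C$ exists, so $U=M+NC$ is a unit.
\begin{itemize}
\item A real-rational unit has no zeros on $[0,\infty]$, so $U$ has constant sign there.
\item At each nonnegative real zero $z$ of $P^c$ (possibly $z=\infty$) we have $N(z)=0$, hence $U(z)=M(z)$.
\item Therefore $M$ takes the same sign at every such zero.
\end{itemize}
The real zeros of $M$ on $[0,\infty)$ are exactly the real unstable poles of $P^c$; coprimeness ensures there are no cancellations. So $M$ having equal signs at consecutive real zeros of $P^c$ means it has an even number of zeros, counted with multiplicity, between them. This is the parity interlacing condition.

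\emph{Sufficiency.} Assume the parity condition holds. Then $M$ has a common sign, which we may normalize to positive, at all nonnegative real zeros of $P^c$. The task is to construct a unit $U$ with
\[
U(z_i)=M(z_i)
\]
at every closed-right-half-plane zero $z_i$ of $N$, with matching derivatives up to multiplicity. Given such a $U$, set $C:=(U-M)/N$. The numerator vanishes wherever $N$ does in the closed right half-plane, so $C$ has no unstable poles and is proper, i.e.\ $C\in\mathcal{S}$. By the first step it is stabilizing, and it is stable.

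To build $U$, I would map the closed right half-plane to the closed unit disc and proceed by induction on the number of interpolation points, following Youla's construction. Suppose a unit $U_k$ matches the first $k$ conditions. Seek
\[
U_{k+1} = U_k\,(1+f\,g)^{\ell}, \quad\text{or}\quad U_{k+1} = U_k\, e^{f g}\ \text{in a rational-approximation sense},
\]
where $g\in\mathcal{S}$ vanishes at the first $k$ points (so those conditions are preserved) and $f\in\mathcal{S}$ is chosen so the new value is hit. The main obstacle is choosing $f$ so that the factor stays a unit while hitting a prescribed value.
\begin{itemize}
\item For a complex (conjugate) pair, any nonzero target can be reached by taking a suitable power/log-type correction, since $\mathbb{C}\setminus\{0\}$ is connected.
\item For a real point, the target ratio $M(z_{k+1})/U_k(z_{k+1})$ must be positive, since a real-rational unit cannot change sign along $[0,\infty]$. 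This is exactly what the common-sign property supplies. Then the ratio has a real logarithm, and the correction can be realized as $(1+fg)^{\ell}$ with $\|fg\|_\infty<1$, keeping it a unit.
\end{itemize}
Multiplicities are handled by also matching derivatives via higher powers of $g$. I expect this real-point unit-interpolation step, keeping the factor a unit while matching the value, to be the delicate part of the argument.
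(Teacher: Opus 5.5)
The paper gives no proof of this lemma; it is cited from Doyle--Francis--Tannenbaum. Your proposal is correct and is essentially the standard proof from that source. A stable $C$ stabilizes $P^c=N/M$ iff $M+NC$ is a unit. Necessity follows because a real-rational unit has constant sign on $[0,\infty]$. Sufficiency follows from Youla-type iterative unit interpolation with factors $(1+fg)^{\ell}$, $\|fg\|_\infty<1$, after which $C=(U-M)/N$ is stable.

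Two details are worth stating explicitly. First, each factor $(1+fg)^{\ell}$ with $\|fg\|_\infty<1$ is positive on the real axis, so starting from $U_0=1$ every $U_k$ stays positive on $[0,\infty]$; together with $M$ normalized positive at the real zeros, this is what makes each real-point ratio $M(z)/U_k(z)$ positive. Second, at a non-real interpolation point, $f$ needs two real parameters (e.g.\ $f=\alpha+\beta/(s+1)$) to attain a prescribed complex value while remaining real-rational.
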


\begin{lemma}[{\cite[Theorem 4 of Chapter 5]{doyle2013feedback}}]
\label{lem_iff_co_stabilizable}
Consider the continuous-time transfer functions $P^c_1(s)$ and $P^c_2(s)$. 
 There exists a proper continuous-time transfer function $C^c(s)$ such that both $\frac{1}{1+C^c(s)P^c_1(s)}$ and $\frac{1}{1+C^c(s)P^c_2(s)}$ are stable transfer functions if and only if there exist coprime factorizations $(N_1(s), M_1(s), X_1(s), Y_1(s))$ and  $(N_2(s), M_2(s), X_2(s), Y_2(s))$ of $P^c_1(s)$ and $P^c_2(s)$, respectively, such that
\begin{equation}
\label{def_hat_P}
\hat{P}(s) :=\frac{N_2(s) M_1(s) -N_1(s) M_2(s)}{N_2(s) X_1(s)+M_2(s) Y_1(s)},
\end{equation}
 is strongly stabilizable.
 
 
\end{lemma}

Consider two fully-observed LTI systems $(\A_1,\B_1)$ and $(\A_2,\B_2)$ defined in \eqref{LTI}, for which the output matrix is $\I_n$. The open-loop discrete-time transfer functions from input to state for these two systems are  $
    P^d_i(z)   = (z\I_n - \A_i)^{-1}\B_i.
$
The discrete-time transfer function from state to input  for the linear dynamic state feedback controller defined in \eqref{p_dynamic_state_feedback} is 
$C^d(z)=\K + \Hb (z \I_p - \F)^{-1} \G.$ 
{In particular, in the following analysis, we restrict attention to scalar (SISO)
transfer functions obtained from the systems under consideration via appropriate reductions.}
With the bilinear transformation $z=\frac{s+1}{s-1}$, we can transfer the discrete-time transfer functions into continuous-time transfer functions $P^c_i(s) = P^d_i\left(\frac{s+1}{s-1}\right)$ with $i=1,2$ and $C^c(s) = C^d\left(\frac{s+1}{s-1}\right)$. Then, the co-stabilization of $(\A_1,\B_1)$ and $(\A_2,\B_2)$ by linear dynamic state feedback controller defined in \eqref{p_dynamic_state_feedback} is equivalent to finding a proper $C^c(s)$ such that both $\frac{1}{1+C^c(s)P^c_1(s)}$ and $\frac{1}{1+C^c(s)P^c_2(s)}$ are stable transfer functions (More details can be found in Chapter 5 of \cite{doyle2013feedback}.). This allows us to use Lemma \ref{lem_iff_co_stabilizable} to prove our main results.

\subsection{Proof of Theorem \ref{thm_three_static_vs_dynamic}}
\label{proof_thm_three_static_vs_dynamic}

\begin{proof}[Proof of Theorem \ref{thm_three_static_vs_dynamic}]  
\noindent \textbf{Case 1:  $S_1$ and $S_2$ in \eqref{eq_two_scalar_sys_gaussian} with $|a_1|,|a_2|>1$, $|a_1 - a_2| > 2$, $b_1 = b_2$.}

The proof for the linear static state feedback controller follows easily by checking the sign of $a_1+b_1 k$ and $a_2+b_2 k$. Next, we prove the result for the linear dynamic state feedback controller.
The discrete-time  transfer functions of $S_1$ and $S_2$ in \eqref{eq_two_scalar_sys_gaussian} are
$
  P^d_i(z) = \frac{b_i}{z - a_i},
$
with $i=1,2$. Mapping the discrete-time transfer functions $P^d_1(z)$ and $P^d_2(z)$ into continuous-time transfer functions via the {one-to-one} bilinear transformation, we get
$
  P^c_i(s) = \frac{b_i(s-1)}{(1 - a_i)s+a_i+1},   
$
with $|a_i|\neq 1$ and $b_i \neq  0$, for $i=1,2$.
We can then construct a coprime factorization of $P_i(s)$ for $i=1,2$ as follows:
\begin{equation}
\label{eq_coprime_P12}
    \begin{aligned}
N_i(s)   &=\frac{b_i(s-1)}{s+1}  ,
M_i(s)   =\frac{(1-a_i) s+(a_i+1)}{s+1}  , \\
  X_i(s)   & =\frac{ \frac{2 a_i-1}{b_i} s-\frac{1}{b_i}}{s+1}  ,
Y_i(s)   =\frac{2 s}{s+1}.
    \end{aligned}
\end{equation}
Based on \eqref{def_hat_P} and \eqref{eq_coprime_P12}, noting that $b_1=b_2\neq 0$, we obtain:
\begin{equation}
{\small
\begin{aligned}
\hat{P}(s) 
&=  \frac{  b_1(a_2-a_1)(s-1)^2}{(s-1)[(2 a_1-1) s-1]+2 s[(1-a_2) s+(1+a_2)]}.
\end{aligned}}
\end{equation}
Next, we need to determine if the above continuous-time transfer function $ \hat{P}(s) $ is strongly stabilizable. The above continuous-time transfer function $ \hat{P}(s) $ has two repeated unstable real zeros at $1$, and $1$ is not a pole of $ \hat{P}(s) $. Therefore, $ \hat{P}(s) $ does not have real poles between its unstable zeros, which, based on Lemma \ref{lem_iff_strong_stab}, implies that $ \hat{P}(s)$ is strongly stabilizable for all  $b_1=b_2\neq 0$. Then by Lemma \ref{lem_iff_co_stabilizable}, we complete the proof.

\noindent \textbf{Case 2:  $S_1$ and $S_2$ in \eqref{eq_two_scalar_sys_gaussian} with $a_1 = a_2=a>1,$  $ \frac{b_1}{b_2}  >  \frac{a+1}{a-1}$.}

As before, the proof for the linear static state feedback controller can be done by checking the stability of $a_1+b_1 k$ and $a_2+b_2 k$. Next, we focus on the proof for linear dynamic state feedback controllers. 
Based on \eqref{def_hat_P} and \eqref{eq_coprime_P12}, for $a_1=a_2=a>1$ and $\frac{b_1}{b_2} >  \frac{a+1}{a-1}$, we have
\begin{equation}
\begin{aligned}
& \hat{P}(s)  
= \\
& \frac{\left(b_2-b_1\right)(s-1)((1-a) s+(a+1))}{\left[ \frac{b_2}{b_1}(2a - 1) + 2(1 - a) \right] s^2 
+ \left[  2a\left( 1-\frac{b_2}{b_1}\right) + 2 \right] s 
+ \frac{b_2}{b_1}}.
\end{aligned}
\end{equation}
Next, we need to determine if the above continuous-time transfer function $ \hat{P}(s)$ is strongly stabilizable. Firstly, the above continuous-time transfer function $ \hat{P}(s)$ has two independent unstable real zeros at $1$ and $\frac{a+1}{a-1}$ respectively. Noting that the denominator of $ \hat{P}(s)$ is a quadratic and it is greater than zero when evaluated at $1$ and $\frac{a+1}{a-1}$, we conclude that there are either no poles or two real poles between the zeros. Hence, invoking Lemmas \ref{lem_iff_strong_stab} and \ref{lem_iff_co_stabilizable} concludes the proof.
\end{proof}

\subsection{Proof of Proposition \ref{lem_co_stab_gao_mem_2}}
\label{proof_lem_co_stab_gao_mem_2}
Before proving Proposition \ref{lem_co_stab_gao_mem_2}, we present the Jury test for discrete-time stability of order-$2$ polynomials.
\begin{lemma}[Jury stability test for order-$2$ polynomial, Theorem 4.6 in \cite{fadali2012digital}]
\label{lem_jury_test}
Consider an order-$2$ polynomial
$
 z^2 + q_{1} z +   q_0 
$. All roots of this polynomial are inside the unit circle 
    if and only if $
  -(1   + q_0)< q_1 < 1   + q_0     
  $ and $  |q_0| < 1
$.
 \end{lemma}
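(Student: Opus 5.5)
The plan is to reduce the discrete-time stability question to a continuous-time Hurwitz question via the same bilinear transformation $z=\frac{s+1}{s-1}$ used in Appendix~\ref{ape_fre_domin_tool}, and then apply the elementary fact that a real quadratic is Hurwitz if and only if all its coefficients are nonzero and of the same sign.

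\textbf{Step 1: the map.} The Möbius map $s\mapsto z=\frac{s+1}{s-1}$ is a bijection from $\{\mathrm{Re}\, s<0\}$ onto $\{|z|<1\}$, since $|s+1|<|s-1|$ exactly when $\mathrm{Re}\, s<0$. Its inverse is $s=\frac{z+1}{z-1}$, which is defined for $z\neq 1$.

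\textbf{Step 2: the transformed polynomial.} Let $p(z)=z^2+q_1z+q_0$. Multiplying $p\!\left(\frac{s+1}{s-1}\right)$ by $(s-1)^2$ gives
\[
\tilde p(s)=(1+q_1+q_0)s^2+2(1-q_0)s+(1-q_1+q_0).
\]
\begin{itemize}
\item Suppose $1+q_1+q_0=p(1)\neq 0$. Then $z=1$ is not a root of $p$. The roots of $p$ therefore correspond bijectively, counting multiplicity, to the two roots of $\tilde p$ via $s=\frac{z+1}{z-1}$. Hence both roots of $p$ lie in the open unit disc if and only if $\tilde p$ is Hurwitz.
\item If instead $p(1)=0$, then $z=1$ is a root on the unit circle, so $p$ is not stable. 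The stated condition also fails in this case, because $q_1=-(1+q_0)$ violates the strict inequality $-(1+q_0)<q_1$.
\end{itemize}
So it suffices to treat the case $p(1)\neq 0$.

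\textbf{Step 3: the Hurwitz condition.} A real quadratic with nonzero leading coefficient is Hurwitz if and only if its three coefficients are nonzero and share a sign. This follows from Routh--Hurwitz, or directly from Vieta's formulas (the roots' sum is negative and product positive).

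\begin{itemize}
\item \emph{All coefficients negative is impossible.} Adding $1+q_1+q_0<0$ and $1-q_1+q_0<0$ gives $q_0<-1$. This contradicts $2(1-q_0)<0$, which requires $q_0>1$.
\item \emph{All coefficients positive.} The three conditions $1+q_1+q_0>0$, $1-q_1+q_0>0$ and $1-q_0>0$ read $-(1+q_0)<q_1<1+q_0$ and $q_0<1$.
\end{itemize}
The first two inequalities force $1+q_0>0$, i.e.\ $q_0>-1$. Together with $q_0<1$ this gives $|q_0|<1$. Conversely, the stated conditions $-(1+q_0)<q_1<1+q_0$ and $|q_0|<1$ imply all three coefficients are positive. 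This yields the claimed equivalence.

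\textbf{Main obstacle.} The only delicate point is the bookkeeping in Step 2. Multiplying by $(s-1)^2$ could lose a root at infinity exactly when $p(1)=0$, and this degenerate case must be handled separately, as done above.
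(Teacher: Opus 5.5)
Your proof is correct. It differs from the paper only in that the paper gives no argument for this lemma at all: it imports it from the cited textbook as the second-order instance of the Jury test. You instead derive it from scratch, using two ingredients:
\begin{itemize}
\item the same bilinear map $z=\frac{s+1}{s-1}$ that the appendix already uses for the coprime-factorization arguments, which turns $p$ into $\tilde p(s)=p(1)s^2+2(1-q_0)s+p(-1)$;
\item the sign criterion for Hurwitz quadratics.
\end{itemize}
The details hold up. When $p(1)\neq 0$, $\tilde p$ has exact degree two. The M\"obius map has nonvanishing derivative, so multiplicities are preserved, and $\tilde p(1)=4$, so no root of $\tilde p$ corresponds to $z=\infty$. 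The degenerate case $p(1)=0$ is handled correctly, as is the exclusion of the all-negative sign pattern.

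As for what each route buys: the citation is shorter, and Jury's tabular test works directly on the $z$-domain coefficients for any degree, whereas the bilinear substitution gets cumbersome for higher-order polynomials. Your route is elementary and self-contained, and it shows where each inequality comes from, namely $p(1)>0$, $p(-1)>0$ and $q_0<1$, with $q_0>-1$ then forced. It also fits naturally with the frequency-domain machinery the appendix already relies on.
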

 
\begin{proof}[Proof of Proposition \ref{lem_co_stab_gao_mem_2}]
The case of linear static state feedback controller follows as before from checking the stability of $a_1+b_1 k$ and $a_2+b_2 k$ with $k\in \Rb$.
For the memory-2 linear state-history feedback, the augmented closed-loops for $i=1,2$ take the form 
\begin{equation}
    \begin{bmatrix} x_{t+1} \\ x_{t} \end{bmatrix} = \underbrace{\begin{bmatrix} a_i + b_i k_0 & b_i k_1 \\ 1 & 0 \end{bmatrix}}_{=: \D_{a_i,b_i}} \begin{bmatrix} x_t \\ x_{t-1} \end{bmatrix},
\end{equation}
The characteristic polynomial of $\D_{a_i,b_i}$ is given by:
\begin{equation}
\label{eq_kfg}
    \Delta_{i}(z) := \det (z \I_2 - \D_{a_i,b_i})= z^2 - (a_i + b_i k_0) z - b_i k_1.
\end{equation}
Let $k_0, k_1$ stabilize $\D_{a_1,b_1}$ so that the eigenvalues of $\D_{a_1,b_1}$ satisfy $|p_1^{cl}|<1$ and $|p_2^{cl}|<1$.
Then,
\begin{equation}
\label{eq_k12}
     \Delta_{1}(z) = (z - p_1^{cl})(z - p_2^{cl})=z^2-(p_1^{cl}+p_2^{cl})z + p_1^{cl} p_2^{cl} .
\end{equation}
Based on \eqref{eq_kfg} and \eqref{eq_k12}, we can use $p_{1}^{cl}$ and $p_{2}^{cl}$ to express all stabilizing $(k_0,k_1)$ for $\D_{a_1,b_1}$,
\begin{equation}
\label{eq_form_k12}
    k_0   =  \frac{p_1^{cl}+p_2^{cl} - a_1}{b_1}\; \text{and} \; k_1  = - \frac{p_1^{cl}p_2^{cl} }{b_1}.
\end{equation}
Substituting $k_0$ and $k_1$ from \eqref{eq_form_k12} into $ \Delta_{2}(z)$, we have
\begin{equation}
\label{eq_solve_k12}
     \Delta_{2}(z)  =z^2-(a_2-a_1+p_1^{cl}+p_2^{cl})z + p_1^{cl} p_2^{cl} .
\end{equation}
 
Based on the conditions of Jury stability test in Lemma \ref{lem_jury_test}, $q_0$
 of  $\Delta_{i}(z) $ are the same for $i=1,2$. We just need to check the following condition on $q_1$ for $\Delta_{2}(z) $:
     \begin{equation}
     \label{eq:jr1d}
     \begin{aligned}
& -(1+p_1^{cl}p_2^{cl}) =-(1+q_0)   <     q_1 = -(a_2-a_1+p_1^{cl}+p_2^{cl} ) \\
& < 1+ q_0=1+p_1^{cl}p_2^{cl}.
\end{aligned}
     \end{equation}
     The result follows by simplifying \eqref{eq:jr1d} as:
      $$-4<-(1 +p_1^{cl})(1+p_2^{cl})<a_2 - a_1  <(1 -p_1^{cl})(1-p_2^{cl}) < 4.$$
\end{proof}

\subsection{Proof of Theorem \ref{thm_impossible_costabilization}}
\label{proof_thm_impossible_costabilization}
\begin{proof}[Proof of Theorem \ref{thm_impossible_costabilization}] Consider $S_1$ and $S_2$ in \eqref{eq_two_scalar_sys_gaussian} with $a_1=a_2=a$, $|a|\geq 1$, and $b_1 = -b_2 = b > 0$. 
Based on \eqref{def_hat_P} and \eqref{eq_coprime_P12},  for this setting, we have
\begin{equation}
{\small
\hat{P}(s) = \frac{-2b(s-1)[(1-a) s+(a+1)]}{\left(3-4 a\right) s^2+\left(4 a+2\right)}}
\end{equation}
Because $\left(4 a+2\right)^2+4\left(3-4 a\right)=16 a^2+16=16\left(a^2+1\right)>0$, $\hat{P}(s)$ has two real poles.

{In addition, it can be checked that the strong stabilizability of $\hat{P}(s)$ is invariant under different coprime factorizations of $P^c_1(s)$ and  $P^c_2(s)$. Then, if there exists one group of coprime factorization of $P^c_1(s)$ and  $P^c_2(s)$ such that $\hat{P}(s)$ is not strongly stabilizable, then for all other coprime factorizations, $\hat{P}(s)$ is also not strongly stabilizable.}

Next, we discuss the cases $|a|=1$ and $|a|>1$ separately. When $a=1$, $\hat{P}(s)$ have two nonnegative real zeros at $1$ and $+\infty$ and the denominator is $-s^2+6s-1$. Since the denominator is a quadratic with a leading negative coefficient, evaluating to a positive number at $1$, there exists a real pole of the system between $1$ and $+\infty$. Combining with Lemma \ref{lem_iff_strong_stab}, $ \hat{P}(s)$ is not strongly stabilizable for $a=1$. With a similar reasoning, we can show $ \hat{P}(s)$ is not strongly stabilizable for $a=-1$ either. 


When $|a|>1$,  $ \hat{P}(s)$ has two independent unstable  real zeros at $1$ and $\frac{a+1}{a-1}$ respectively. Next, we determine the positions of poles of  $ \hat{P}(s)$. First, recall that the denominator evaluates to a positive number at $1$, and it can be shown that it evaluates to $\frac{-4a^2}{(a-1)^2}<0$ at $\frac{a+1}{a-1}$. Therefore, 
it can be concluded that $ \hat{P}(s)$ has a real pole less than $1$ and one between $1$ and $\frac{a_1+1}{a_1-1}$.
Combining with Lemma \ref{lem_iff_strong_stab}, we get that $ \hat{P}(s)$ is not strongly stabilizable for all $|a|>1$. Then by Lemma \ref{lem_iff_co_stabilizable}, there does not exist a co-stabilizing linear dynamic state feedback controller for this pair. As the linear static state feedback controller is a special case of the linear dynamic state feedback controller, we complete the proof.
\end{proof}

\subsection{Proof of Theorem \ref{thm_static_vs_dynamic_zeng_exp}}
\label{proof_thm_static_vs_dynamic_zeng_exp}
\begin{proof}[Proof of Theorem \ref{thm_static_vs_dynamic_zeng_exp}] 

The proof for linear static state feedback controllers is finished in Proposition 1 in \cite{zeng2023hardness}. Next, let us finish the proof for the linear dynamic state feedback controllers of this case.  For the pair in \eqref{eq_zeng} with $r>1$, $\alpha_3=\alpha_4=1$, $0<v<\frac{r-1}{2}$, $b^{(3)}=0$, $b^{(4)}=\bar{b}$,
 $S_3$ and $S_4$ in \eqref{eq_zeng} are able to be reduced to the following two scalar systems: $x_{t+1}^{(1)} = r x_{t}^{(1)} + v^n u_{t-n+1} $ and $x_{t+1}^{(1)} = r x_{t}^{(1)} + v^n  u_{t-n+1}+ \bar{b} u_{t}$. The discrete-time transfer functions for these two systems are:
 \begin{equation}
P^d_3(z)  = \frac{ v^n}{(z-r)z^{ n-1 }},\;\text{and} \;
P^d_4(z)  = \frac{v^n +\bar{b} z^{n-1}}{(z-r)z^{ n-1 }}.
\end{equation}
Similar to Theorem \ref{thm_three_static_vs_dynamic}, we replace $z$ with $ \frac{s+1}{s-1}$ in $P^d_3(z)$ and $P^d_4(z)$ to get their continuous-time transfer functions as follows:
\begin{equation}
\begin{aligned}
&P^c_3(s) =   \frac{v^n  (s - 1)^n }{ \left[(1 + r) - (r - 1)s\right](s + 1)^{n - 1} }, \\
&P^c_4(s)  = \frac{v^n (s - 1)^{n  } + \bar{b} (s + 1)^{n - 1}(s - 1) }{ \left[(1 + r) - (r - 1)s\right](s + 1)^{n - 1} }.
\end{aligned}
\end{equation}
For $P^c_3(s)$, a coprime factorization is the following:
\begin{equation}
\begin{aligned}
& N_3(s)=\frac{v^n(s-1)^n}{(s+1)^n},  
& M_3(s)=\frac{(1+r)-(r-1) s}{s+1},  \\
& X_3(s) = \frac{  a_0 }{s+1},  
& Y_3(s) = \frac{ \sum_{k=0}^n b_k s^k  }{(s+1)^n},
\end{aligned}
\end{equation}
where $a_0 = \frac{2\,r^{n+1}}{(r-1)\,v^n}$, $ b_n = -\frac{1}{r-1}$, and 
\begin{equation}
{\small
\begin{aligned}
&b_k = \\
&\frac{1}{1+r} \sum_{j=0}^{k} 
\left( \frac{r-1}{\,1+r\,} \right)^{k-j}
\left[ \binom{n+1}{j} - \frac{2\,r^{n+1}}{\,r-1\,} \binom{n}{j} (-1)^{n-j} \right],
\end{aligned}}
\end{equation}
for $ k = 0,1,\dots,n-1.$ 
For $P^c_4(s)$, a coprime factorization has the following $N_4(s)$ and $M_4(s)$
\begin{equation}
\begin{aligned}
& N_4(s)=\frac{ v^n(s - 1)^{n  } + \bar{b} (s + 1)^{n - 1}(s - 1)}{(s+1)^n},  \\
& M_4(s)=\frac{(1+r)-(r-1) s}{s+1}.
\end{aligned}
\end{equation}
Because $M_4(s)=M_3(s)$ and $N_4(s)=N_3+\bar{b} \frac{s-1}{s+1}$,   we have
$$
\begin{aligned}
N_4 M_3-N_3 M_4 & =\bar{b} \frac{s-1}{s+1} M_3\\
&=\bar{b} \frac{(1+r)-(r-1) s}{s+1} \frac{s-1}{s+1}, 
\end{aligned}
$$
$$
\begin{aligned}
N_4 X_3+M_4 Y_3 & =\left(N_3 X_3+M_3 Y_3\right)+\bar{b} \frac{s-1}{s+1} X_3\\
& =1+\bar{b} \frac{s-1}{s+1} \frac{ a_0}{s+1}.
\end{aligned}
$$
Then, combining with \eqref{def_hat_P}, we get
\begin{equation}
\begin{aligned}
    \hat{P}(s) 
    &=  \frac{\bar{b}((1+r)-(r-1) s) \frac{s-1}{(s+1)^2}}{1+\bar{b} \frac{2 r^{n+1}}{r-1} \frac{s-1}{(s+1)^2}}\\
    &= \frac{\bar{b}((1+r)-(r-1) s) (s-1)}{(s+1)^2+\bar{b} \frac{2 r^{n+1}}{(r-1)v^n}(s-1)  }.
    \end{aligned}
\end{equation}
Firstly, the above continuous-time transfer function $ \hat{P}(s)$ has two independent unstable real zeros at $1$ and $\frac{r+1}{r-1}$ respectively. Next, we determine the positions of the poles of  $ \hat{P}(s)$.
Consider the denominator
\begin{equation}
    Q(s)=(s+1)^2+\bar{b} \frac{2 r^{n+1}}{v^n(r-1)}(s-1).
\end{equation}
It can be checked that $Q(1)=4>0$ and $Q((r+1)/(r-1)) >0$ for $\bar{b}>0$ and $r>1$. Because $Q(s)$ is a quadratic function, $Q(s)$ always has $0$ or $2$  zeros between $1$ and $ (r+1)/(r-1)$. Based on this fact
and Lemma \ref{lem_iff_strong_stab}, we get that $ \hat{P}(s)$ is always strongly stabilizable for all $r>1$ and $\bar{b}>0$. Then by Lemma \ref{lem_iff_co_stabilizable}, we complete the proof.
\end{proof}

\subsection{Proof of Theorem \ref{coro_tasos}}
\label{proof_theorem_coro_tasos}

\begin{proof}[Proof of Theorem \ref{coro_tasos}]
   For the pair in \eqref{eq_zeng} with $r>1$, $\alpha_3=1$, $\alpha_4=-1$, $0<v<1$, and $b^{(3)}=b^{(4)}=0$, $S_3$ and $S_4$ can be reduced to the following two scalar time-delay systems: 
      \begin{equation}
\label{reduced_tasos}
\begin{aligned}
S_i &: \quad x_{t+1}^{(1)} = r x_{t}^{(1)} + \alpha_i v^{n} u_{t-n+1}, 
\end{aligned}
\end{equation}
where $i\in\{5,6\}$, $\alpha_5=1$ and  $\alpha_6=-1$. Therefore, the co-stabilization of $S_3$ and $S_4$ is equivalent to that of $S_5$ and $S_6$ in \eqref{reduced_tasos}. When taking $a=r$ and $b=v^n$ in Theorem \ref{thm_impossible_costabilization}, we know that there does not exist a co-stabilizing linear dynamic state feedback controller for:
     \begin{equation}
\label{reduced_scalar}
S_i : \quad x_{t+1}^{(1)} = r x_{t}^{(1)} + \alpha_iv^{n} u_{t}, 
\end{equation}
where $i\in\{7,8\}$, $\alpha_7=1$ and  $\alpha_{8}=-1$.
Next, we prove the conclusion by the contrapositive.
Assume that there exists a linear  dynamic state feedback controller 
\begin{equation}
\label{eq:controller_appendix}
\begin{aligned}
\z_{t+1} = \F \z_t + \G x_t,  
u_t = \Hb \z_t + \K x_t,
\end{aligned}
\end{equation}
that co-stabilizes the systems in \eqref{reduced_tasos}.

Introduce the delay-input-state
\[
\boldsymbol{\eta}_t :=
\begin{bmatrix}
u_{t-n+1} & u_{t-n+2} & \cdots & u_{t-1}
\end{bmatrix}^\top \in \mathbb{R}^{n-1}.
\]
Then the following  dynamics always satisfy
\begin{equation}
\label{delay_controller}
\boldsymbol{\eta}_{t+1} = \mathbf{S}\boldsymbol{\eta}_t + \mathbf{e} u_t,
\qquad
u_{t-n+1} = \mathbf{c}^\top \boldsymbol{\eta}_t,
\end{equation}
where
\[
\mathbf{S} :=
\begin{bmatrix}
0 & 1 & 0 & \cdots & 0\\
0 & 0 & 1 & \cdots & 0\\
\vdots & & & \ddots & \vdots\\
0 & 0 & 0 & \cdots & 1\\
0 & 0 & 0 & \cdots & 0
\end{bmatrix},
\quad
\mathbf{e} :=
\begin{bmatrix}
0\\ \vdots\\ 0\\ 1
\end{bmatrix},
\quad
\mathbf{c} :=
\begin{bmatrix}
1\\ 0\\ \vdots\\ 0
\end{bmatrix}.
\]
The delayed systems in \eqref{reduced_tasos} can therefore be written as
\begin{equation}
\label{eq:augmented_plant}
\begin{bmatrix}
x_{t+1}\\ \boldsymbol{\eta}_{t+1}
\end{bmatrix}
=
\begin{bmatrix}
r & \pm v^{n} \,\mathbf{c}^\top\\
\mathbf{0} & \mathbf{S}
\end{bmatrix}
\begin{bmatrix}
x_t\\ \boldsymbol{\eta}_t
\end{bmatrix}
+
\begin{bmatrix}
0\\ \mathbf{e}
\end{bmatrix}
u_t .
\end{equation}
Interconnecting \eqref{eq:augmented_plant} with the controller
\eqref{eq:controller_appendix} yields the following closed-loop autonomous LTI systems on the
state $\tilde{\z}_t := [x_t \;\boldsymbol{\eta}_t^\top \; \z_t^\top]^\top$  
\begin{equation}
\label{eq:cl_augmented}
\tilde{\z}_{t+1} 
=
\mathbf{A}_{\mathrm{cl},\pm}
 \tilde{\z}_t ,
\end{equation}
where the closed-loop matrix is
\begin{equation}
\label{eq:Acl_augmented}
\mathbf{A}_{\mathrm{cl},\pm}
=
\begin{bmatrix}
r & \pm v^{n}\,\mathbf{c}^\top & \mathbf{0}\\
\mathbf{e}\K & \mathbf{S} & \mathbf{e}\Hb\\
\G & \mathbf{0} & \F
\end{bmatrix}.
\end{equation}
By assumption, $\mathbf{A}_{\mathrm{cl},+}$ and $\mathbf{A}_{\mathrm{cl},-}$ are both stable.
 
Next, we consider the costabilization of the systems in \eqref{reduced_scalar}. Let $\boldsymbol{\delta}_t\in\mathbb{R}^{n-1}$ denote the controller-side delay state:
\[
\boldsymbol{\delta}_t :=
\begin{bmatrix}
u^{K}_{t-n+1} & u^{K}_{t-n+2} & \cdots & u^{K}_{t-1}
\end{bmatrix}^\top,
\]
where $u^{K}_t := \Hb \z_t + \K x_t$.
Then  \eqref{delay_controller} implies the realization
\begin{equation}
\label{eq:controller0}
\begin{aligned}
&\z_{t+1}  = \F \z_t + \G x_t, 
&u^K_t  = \Hb \z_t + \K x_t,\\
&\boldsymbol{\delta}_{t+1}  = \mathbf{S}\boldsymbol{\delta}_t + \mathbf{e} u^K_t, 
&u_t  = \mathbf{c}^\top \boldsymbol{\delta}_t .
\end{aligned}
\end{equation}
This is again a linear dynamic state feedback controller.

Combining the controller in \eqref{eq:controller0} and the non-delayed plant
\eqref{reduced_scalar}, the resulting closed-loop dynamics on $\hat{\z}_t := [x_t \;\boldsymbol{\delta}_t^\top \; \z_t^\top]^\top$
 are
\begin{equation}
\label{eq:cl_nodelay}
 \hat{\z}_{t+1} 
=
\mathbf{A}_{\mathrm{cl},\pm}
 \hat{\z}_t .
\end{equation}
Comparing \eqref{eq:cl_nodelay} with the closed-loop dynamics in \eqref{eq:cl_augmented}, we see that the closed-loop matrices are identical up to relabeling
$\boldsymbol{\eta}_t \leftrightarrow \boldsymbol{\delta}_t$.
Hence, for each choice of sign $\pm$, the closed-loop matrix in
\eqref{eq:cl_nodelay} is stable.

Therefore, the co-stabilization of the systems in \eqref{reduced_tasos} implies the co-stabilization of the systems in 
\eqref{reduced_scalar}. In other words, if there does not exist a co-stabilizing linear dynamic state feedback controller for \eqref{reduced_scalar}, there also does not exist a co-stabilizing linear dynamic state feedback controller for \eqref{reduced_tasos}.
 
 In conclusion, there does not exist a co-stabilizing linear dynamic state feedback controller for $S_3$ and $S_4$ when the conditions in Theorem \ref{coro_tasos} hold.
\end{proof}

\balance
\bibliographystyle{IEEEtran}
\bibliography{references}

\end{document}